
\documentclass[pdflatex,sn-mathphys-num]{sn-jnl} 

\usepackage{graphicx}
\usepackage{tikz}
\usetikzlibrary{arrows.meta, positioning, shapes.geometric, decorations.markings}

\usetikzlibrary{calc, patterns, patterns.meta}
\usepackage{subcaption}

\usepackage{multirow}%
\usepackage{amsmath,amssymb,amsfonts}%
\usepackage{amsthm}%
\usepackage{mathrsfs}%
\usepackage[title]{appendix}%
\usepackage{xcolor}%
\usepackage{textcomp}%
\usepackage{manyfoot}%
\usepackage{booktabs}%
\usepackage{algorithm}%
\usepackage{algorithmicx}%
\usepackage{algpseudocode}%
\usepackage{listings}%

\theoremstyle{thmstyleone}%
\newtheorem{theorem}{Theorem}
\newtheorem{proposition}[theorem]{Proposition}%

\theoremstyle{thmstyletwo}%
\newtheorem{example}{Example}%
\newtheorem{remark}{Remark}%

\newtheorem{observation}{Observation}
\theoremstyle{thmstylethree}%
\newtheorem{definition}{Definition}%

\raggedbottom

\begin{document}

\title[Word-representability and comparability]{Word-representability and comparability: Minimal forbidden induced subgraphs and cover number bounds}

\author[1]{\fnm{Benny} \sur{George Kenkireth}}\email{ben@iitg.ac.in}
\author[1]{\fnm{Gopalan} \sur{Sajith}}\email{sajith@iitg.ac.in}
\author*[1]{\fnm{Sreyas} \sur{Sasidharan}}\email{sreyas.s@iitg.ac.in}

\affil*[1]{\orgdiv{Department of Computer Science and Engineering}, \orgname{Indian Institute of Technology Guwahati}, \orgaddress{\city{Guwahati}, \country{India}}}

\abstract{
Word-representable graphs, characterized by the existence of a semi-transitive orientation, form a well-studied class of graphs. Comparability graphs form another well-studied class and constitute a subclass of word-representable graphs. Both classes are hereditary and admit characterizations in terms of minimal forbidden induced subgraphs. While the minimal forbidden induced subgraphs for comparability graphs are completely characterized, the corresponding characterization for word-representable graphs remains open. 

In this paper, we precisely determine which minimal non-comparability graphs are also minimal non-word-representable graphs by classifying minimal non-comparability graphs according to whether they are word-representable. As a consequence, we provide a complete description of minimal non-word-representable graphs containing an all-adjacent vertex.

We also address an open problem posed by Kenkireth et al.\ concerning the cover number of word-representable graphs by comparability graphs. We demonstrate the existence of word-representable graphs on $n$ vertices whose cover number by comparability graphs is $\Omega(\log n)$, which establishes that the universal $O(\log n)$ upper bound is asymptotically tight for the class of word-representable graphs. For triangle-free circle graphs, we establish that the cover number by comparability graphs is at most $3$ and demonstrate that this bound is tight. More generally, we show that for any circle graph $G$ with clique number $\omega(G)$, the cover number by comparability graphs is bounded by $O(\log \omega(G))$. Finally, we identify four subclasses of word-representable graphs for which the cover number by comparability graphs of every graph in these classes is at most $2$.

}

\keywords{word-representable graph, comparability graph, minimal non-word-representable graph, minimal non-comparability graph, semi-transitive orientation, cover number by comparability graphs, circle graphs}



\maketitle

\section{Introduction}\label{sec1}

Word-representable graphs have been extensively studied. 
The notion 
was introduced by Kitaev~\cite{perkins}. 
A graph is called \emph{word-representable} if there exists a word over its vertex set such that 
two vertices are adjacent in the graph if and only if 
they alternate in the word. 
The class of word-representable graphs encompasses many well-known graph families, including $3$-colorable graphs, comparability graphs, subcubic graphs, and circle graphs~\cite{Halldorsson2011,book}. 
Word-representable graphs admit a complete characterization in terms of semi-transitive orientations: a graph is word-representable if and only if it has such an orientation.
Informally, a \emph{semi-transitive orientation} is an acyclic orientation of the edges of the graph such that, 
whenever there is a directed path $v_1 \to v_2 \to \dots \to v_k$, either there is no edge from $v_1$ to $v_k$, 
or, if such an edge exists, then all edges $v_iv_j$ with $1\leq i< j \leq k$ are present and oriented as $v_i \to v_j$.  
This concept will be defined formally in Section~\ref{preliminaries}.

Word-representable graphs form a \emph{hereditary} class; i.e.,\ every induced subgraph of a word-representable graph is itself word-representable.  Every such class can be characterized by \emph{forbidden induced subgraphs} and has a unique set of minimal ones, here, in the context of word-representability, called \emph{minimal non-word-representable graphs}. Despite substantial progress (see, e.g., \cite{newres,kitaev2017comprehensive,onrepgraphs}), a complete description of the minimal non-word-representable graphs remains open.


Comparability graphs form an important subclass of word-representable graphs. 
They constitute a classical family of graphs
arising from partial orders: a graph is a comparability graph if its vertices can be endowed with a partial order such that two vertices are adjacent precisely when they are comparable. Equivalently, a graph is a comparability graph if its edges admit a transitive orientation.
Comparability graphs are also hereditary; i.e.,\ every induced subgraph of a comparability graph is itself a comparability graph. 
The minimal forbidden induced subgraphs of comparability graphs are referred to as \emph{minimal non-comparability graphs}, and the complete set of such graphs was determined by Gallai~\cite{Gallai1967TransitivOG}. 

Since comparability graphs form a subclass of word-representable graphs, it is natural to ask whether there exist graphs that are simultaneously minimal non-comparability and minimal non-word-representable, and to identify them. In this paper, we answer this question completely.
By classifying minimal non-comparability graphs into word-representable and non-word-representable cases, we determine exactly which of them are minimal non-word-representable. 
This classification also yields a complete list of minimal non-word-representable graphs containing a universal (all-adjacent) vertex, thereby identifying many infinite families of graphs that are minimal non-word-representable.
In the broader context of comparing word-representable graphs and comparability graphs, we also consider the problem of determining the minimum number of comparability graphs whose union forms a given word-representable graph, which was an open problem posed in \cite{kenkireth2026word}.
More formally, given a word-representable graph $G$, we consider the smallest integer $k$ for which there exist comparability graphs $G_1, G_2, \dots, G_k$ such that every edge of $G$ is contained in at least one of these graphs. That is, the edge set of $G$ is the union of the edge sets of $G_1, \dots, G_k$. This integer $k$ is referred to as the \emph{cover number} of $G$ by comparability graphs.

Studying both aspects—identifying common minimal forbidden induced subgraphs and analyzing the cover number—provides insight into the structural relationship between these two graph classes. 
Since comparability graphs form a subclass of word-representable graphs, both problems are natural and contribute to a deeper understanding of how these classes are related.

In this paper, we investigate the following problems.

\paragraph{Problem 1}
\emph{Determine which minimal non-comparability graphs are also minimal non-word-representable graphs.}

\medskip

\paragraph{Problem 2}
\emph{For an arbitrary word-representable graph, determine its cover number by comparability graphs.} 

This problem was posed in~\cite{kenkireth2026word}.

\paragraph{Our contributions}
We completely resolve Problem~1 and provide structural results for Problem~2. Concretely, our contributions are as follows:

\begin{enumerate}
 \item We provide a complete classification of minimal non-comparability graphs into word-representable and non-word-representable graphs, explicitly identifying the intersection between the classes of minimal non-comparability and minimal non-word-representable graphs.

 \item We characterize all minimal non-word-representable graphs that contain a universal (all-adjacent) vertex.

 \item We demonstrate the existence of word-representable graphs on $n$ vertices whose cover number by comparability graphs is $\Omega(\log n)$. Since any $n$-vertex graph can be covered by $O(\log n)$ comparability graphs, this upper bound is asymptotically tight for the class of word-representable graphs.

 \item For triangle-free circle graphs, we show that the cover number by comparability graphs is at most $3$ and that this bound is tight.

\item More generally, we establish that for any circle graph $G$ with clique number $\omega(G)$, the cover number by comparability graphs is bounded by $O(\log \omega(G))$.

\item We identify four subclasses of word-representable graphs for which the cover number by comparability graphs is at most $2$ for every graph in the class.

\end{enumerate}

The remainder of the paper is organized as follows. In Section~\ref{preliminaries}, we present the necessary preliminaries and background results used throughout the paper. Section~\ref{subsec-motivation} provides motivation for the problems studied. In Section~\ref{section-classification}, we address Problem~1 by classifying minimal non-comparability graphs according to whether they are word-representable. In Section~\ref{covering-section}, we investigate Problem~2 concerning coverings of word-representable graphs by comparability graphs. Section~\ref{conclusion-section} contains concluding remarks.

\section{Preliminaries}

\label{preliminaries}

All graphs considered in this paper are simple and undirected. For a graph $G$, let $V(G)$ and $E(G)$ denote its vertex set and edge set, respectively. For a subset $S \subseteq V(G)$, the subgraph of $G$ induced by $S$ is denoted by $G[S]$. The chromatic number, clique number, and girth of a graph $G$ are denoted by $\chi(G)$, $\omega(G)$, and $\mathrm{girth}(G)$, respectively, unless otherwise specified. 
All logarithms in this paper are taken to base $2$ unless otherwise specified.

\subsection{Word-Representable Graphs}
\begin{definition}
    Let $w$ be a word over an alphabet $\Sigma$, and let $a,b \in \Sigma$.  
    Define $w|_{ab}$ as the subsequence of $w$ obtained by retaining only the occurrences of $a$ and $b$, in their original order. The letters $a$ and $b$ \emph{alternate} in $w$ if $w|_{ab}$ contains neither the substring $aa$ nor $bb$.

\end{definition}

\begin{example}
    For $w = 312314$, the letters $1$ and $2$ alternate, since $w|_{12}=121$ contains no substring of the form $11$ or $22$.  
    In contrast, $1$ and $4$ do not alternate, as $w|_{14}=114$ contains a repetition.  
    (Other pairs can be checked similarly; these two illustrate the concept.)
\end{example}

\begin{definition}
\label{wrgdefn}
    A graph $G$ is \emph{word-representable} if there exists a word $w$ over $V(G)$ such that for all distinct $x,y \in V(G)$:
    \[
        (x,y) \in E(G) \iff x \text{ and } y \text{ alternate in } w.
    \]
    Such a word $w$ is called a \emph{word-representant} of $G$.
\end{definition}

\begin{figure}[htbp]
    \centering
    \subfloat[$G_{1}$]{
        \begin{tikzpicture}[scale=0.4, transform shape, every node/.style={font=\Huge}]
            \tikzset{vertex/.style = {circle, fill=black, text=white, minimum size=0.4cm}}
            \tikzset{edge/.style = {-, thick}} 

           
            \node[vertex, label=above:{$1$}] (1) at (90:2cm) {};
            \node[vertex, label=left:{$2$}] (2) at (162:2cm) {};
            \node[vertex, label=left:{$3$}] (3) at (234:2cm) {};
            \node[vertex, label=right:{$4$}] (4) at (306:2cm) {};
            \node[vertex, label=right:{$5$}] (5) at (378:2cm) {};

           
            \draw[edge] (1) -- (2);
            \draw[edge] (2) -- (3);
            \draw[edge] (3) -- (4);
            \draw[edge] (5) -- (4);
            \draw[edge] (5) -- (1);
          
        \end{tikzpicture}
        \label{fig:g1}
    }
    \hspace{2cm}
    \subfloat[$G_{2}$]{
     \begin{tikzpicture}[scale=0.4, transform shape, every node/.style={font=\Huge}]
    \tikzset{vertex/.style = {circle, fill=black, text=white, minimum size=0.4cm}}
            \tikzset{edge/.style = {-, thick}} 

    \node[vertex,label=below:{$1$}] (1) at (0,0) {};
    \node[vertex,label=above:{$2$}] (2) at (90:2cm) {};
    \node[vertex,label=left:{$3$}] (3) at (162:2cm) {};
    \node[vertex,label=left:{$4$}] (4) at (234:2cm) {};
    \node[vertex,label=right:{$5$}] (5) at (306:2cm) {};
    \node[vertex,label=right:{$6$}] (6) at (378:2cm) {};

    \draw[edge] (1) -- (2);
    \draw[edge] (1) -- (3);
    \draw[edge] (1) -- (4);
    \draw[edge] (6) -- (1);
    \draw[edge] (5) -- (1);


    \draw[edge] (2) -- (3);
    \draw[edge] (2) -- (6);
    \draw[edge] (6) -- (5);
    \draw[edge] (5) -- (4);
    \draw[edge] (4) -- (3);
\end{tikzpicture}
        \label{fig:g2}
    }
    \caption{Graph $G_{1}$ is word-representable , while $G_{2}$ is non-word-representable.}
    \label{fig:combined_word_rep_non_word_rep}
\end{figure}
\begin{example}
    The graph $G_{1}$ in Figure~\ref{fig:combined_word_rep_non_word_rep} is word-representable, with the word $w=1521324354$ serving as a word-representant of $G_{1}$. 
The graph $G_{2}$ in Figure~\ref{fig:combined_word_rep_non_word_rep} is the wheel graph $W_{5}$ on six vertices, which is a well-known non-word-representable graph.
\end{example}

Word-representable graphs admit a characterization via the notion of a \emph{semi-transitive orientation}, defined below.
\begin{definition}
\label{semi-trans-defn}
A graph $G$ is \emph{semi-transitive} if it admits an acyclic orientation such that for every directed path
$v_1 \rightarrow v_2 \rightarrow \cdots \rightarrow v_k$, either
$v_1 \not\rightarrow v_k$, or
$v_i \rightarrow v_j$ exists for all $1 \le i < j \le k$.
If $v_1 \rightarrow v_k$ is present while some
$v_i \rightarrow v_j$ with $i < j$ is missing, we call this a
\emph{shortcut}~\cite{book}.
\end{definition}

\begin{figure}[htbp]
    \centering
    \begin{tikzpicture}[scale=0.4, transform shape, every node/.style={font=\Huge}]
        \tikzset{vertex/.style={circle, fill=black, text=white, minimum size=0.4cm}}
        \tikzset{edge/.style={
            thick,
            postaction={decorate},
            decoration={
                markings,
                mark=at position 0.5 with {\arrow{>}}
            }
        }}

     
        \node[vertex, label=above:{$1$}] (1) at (90:2cm) {};
        \node[vertex, label=left:{$2$}] (2) at (162:2cm) {};
        \node[vertex, label=left:{$3$}] (3) at (234:2cm) {};
        \node[vertex, label=right:{$4$}] (4) at (306:2cm) {};
        \node[vertex, label=right:{$5$}] (5) at (378:2cm) {};

        \draw[edge] (1) -- (2);
        \draw[edge] (2) -- (3);
        \draw[edge] (3) -- (4);
        \draw[edge] (5) -- (4);
        \draw[edge] (1) -- (5);
        
    \end{tikzpicture}
    \caption{A semi-transitive orientation of the word-representable graph $G_{1}$ shown in Figure \ref{fig:combined_word_rep_non_word_rep}}
    \label{fig:g1-directed}
\end{figure}

\begin{example}
Figure~\ref{fig:g1-directed} illustrates an orientation of the graph $G_1$ (shown in Figure~\ref{fig:combined_word_rep_non_word_rep}) that is semi-transitive, thereby providing an alternative verification of its word-representability. We can verify that this orientation satisfies the conditions of Definition~\ref{semi-trans-defn}. The longest directed path in this graph is $1 \rightarrow 2 \rightarrow 3 \rightarrow 4$, which does not have its endpoints connected, hence no shortcut. Further, the directed path $1 \rightarrow 5 \rightarrow 4$ is a path of length $2$, and the shortcut condition is vacuous for paths of length at most $2$. Moreover, the orientation is acyclic. Hence, the graph is semi-transitive.

\end{example}

The following theorem provides the fundamental characterization of word-representable graphs.

\begin{theorem}[\cite{book}]
\label{wrg=semi}
A graph \(G\) is word-representable if and only if it admits a semi-transitive orientation.
\end{theorem}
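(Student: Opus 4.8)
The plan is to prove the two implications of the biconditional separately. Throughout, for a word $w$ and a letter $v$ I write $f_w(v)$ for the position of the first occurrence of $v$ in $w$, and $w_{uv}$ for the restriction of $w$ to the letters $u$ and $v$, as in the preliminaries. For the direction ``word-representable $\Rightarrow$ semi-transitive'', let $w$ represent $G$. First I would pass to a \emph{uniform} representing word, i.e.\ one in which every vertex of $G$ occurs the same number $k$ of times; that a word-representable graph always admits such a word is a standard normalisation \cite{book}, so I take it as given. Now orient $G$ by putting $u \to v$ on each edge $\{u,v\}$ whenever $f_w(u) < f_w(v)$. Since this orientation refines the linear order of first occurrences, it is acyclic. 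For the semi-transitivity condition, take a directed path $v_1 \to v_2 \to \cdots \to v_t$ with $v_1 \to v_t \in E(G)$; I must show that $v_i \to v_j \in E(G)$ for all $1 \le i < j \le t$. Uniformity together with the adjacencies $\{v_i,v_{i+1}\}$ and $\{v_1,v_t\}$ forces $w_{v_i v_{i+1}} = (v_i v_{i+1})^k$ for each $i$ and $w_{v_1 v_t} = (v_1 v_t)^k$, and the orientation gives $f_w(v_1) < f_w(v_2) < \cdots < f_w(v_t)$. From these constraints, used simultaneously rather than pair by pair, one shows that every pair $v_i,v_j$ also strictly alternates in $w$, so $v_i \to v_j$ is an edge. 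This interleaving argument is elementary but carries the technical weight of this direction.

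For the converse, ``semi-transitive $\Rightarrow$ word-representable'', I would argue constructively by induction on $|V(G)|$, proving the stronger statement that a graph carrying a semi-transitive orientation $D$ has a uniform representing word whose own first-occurrence orientation is exactly $D$. The base case is trivial. For the inductive step, observe first that deleting a vertex from $D$ leaves an acyclic orientation in which no directed path violates semi-transitivity, so every induced subdigraph of $D$ is again semi-transitive; since $D$ is acyclic, choose a source $x$. By the inductive hypothesis $G - x$ has a uniform representing word $w'$ whose first-occurrence orientation is $D - x$. The construction is then to insert copies of the letter $x$ into $w'$ so that in the new word $x$ alternates with precisely the out-neighbours of $x$, with none of its non-neighbours, and so that the alternation status of every pair of vertices of $G-x$ is unchanged. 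Putting one copy of $x$ at the very front of the word makes $f_w(x)$ minimal --- matching ``$x$ is a source'' --- and alternates $x$ with the vertices it reaches immediately; the real difficulty is an out-neighbour $z$ of $x$ that lies at the far end of a long directed path out of $x$, for which further, carefully placed copies of $x$ are needed to restore alternation of $x$ with $z$ without accidentally making $x$ alternate with a non-neighbour. The content of the argument is that the semi-transitivity of $D$ --- precisely, the fact that no directed path starting at $x$ violates the condition --- is exactly what guarantees that such a consistent schedule of copies of $x$ exists.

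I expect this insertion step to be the main obstacle of the whole proof. The forward direction is essentially routine once the uniform-word normalisation and the interleaving lemma are in hand, whereas the converse requires turning the purely local ``no directed path violates semi-transitivity'' condition into a global prescription for how many copies of each vertex to use and where to place them, and then verifying that this prescription perturbs no already-fixed pair of vertices of $G-x$.
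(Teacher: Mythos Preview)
The paper does not prove this theorem: it is quoted in the Preliminaries with the citation \cite{book} (the result is originally from \cite{Halldorsson2011}) and used thereafter as a black box, so there is no proof in the paper for your proposal to be compared against.

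On the proposal itself: your forward direction is the standard argument and is correct once the interleaving claim is written out. For the converse, you have correctly located the difficulty, but the step you call ``the main obstacle'' --- inserting copies of the source $x$ into the uniform word $w'$ so that $x$ alternates with exactly its out-neighbours, no other pair is disturbed, and uniformity is preserved --- is left entirely unspecified. It is not evident that the induction hypothesis ``uniform word whose first-occurrence orientation is $D-x$'' is strong enough to guarantee such a placement exists; in particular, for two out-neighbours $y,z$ of $x$ the interval constraints on the $i$-th copy of $x$ coming from $y$ and from $z$ need not overlap without some further structural control on $w'$. As written this is a genuine gap rather than a routine omission.
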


The next result refines this characterization by showing that a prescribed vertex can be chosen as a source in a semi-transitive orientation.

\begin{theorem}[\cite{kitaev2023humanverifiable}]
\label{semi-v-source}
Suppose that a graph $G$ is word-representable, and $v$ is a vertex in $G$. Then, there exists a semi-transitive orientation of $G$, where $v$ is a source.
\end{theorem}

The following theorem gives a sufficient condition for semi-transitivity in terms of graph colorability.

\begin{theorem}[\cite{Halldorsson2011}]
\label{3-col-semi}
Any $3$-colorable graph is semi-transitive.
\end{theorem}
Word-representable graphs constitute a hereditary graph class. We recall the formal definition of a hereditary class below.

\begin{definition}
A class $X$ of graphs is called \emph{hereditary} if, for every graph $G \in X$, every induced subgraph of $G$ also belongs to $X$.
\end{definition}

Examples of hereditary graph classes include planar graphs, comparability graphs, and word-representable graphs~\cite{book}. Any hereditary class can be characterized by its set of minimal forbidden induced subgraphs, which capture the essential structural constraints defining the class.

\begin{definition}
A graph $G$ is called a \emph{minimal forbidden induced subgraph} for a hereditary class $X$ if $G \notin X$, but every proper induced subgraph of $G$ belongs to $X$.
\end{definition}

\begin{definition}
A graph is called \emph{minimal non-word-representable} if it is a minimal forbidden induced subgraph for the class of word-representable graphs.
\end{definition}

\begin{example}
    The wheel graph $G_2$ shown in Figure~\ref{fig:combined_word_rep_non_word_rep} is a minimal non-word-representable graph.
\end{example} Determining the complete set of minimal non-word-representable graphs remains an open problem.

\subsection{Comparability Graphs}

\begin{definition}
A graph $G$ is called a \emph{comparability graph} if there exists an orientation of its edges such that the resulting directed graph is transitive; that is, whenever $a \to b$ and $b \to c$ are directed edges, it follows that $a \to c$ is also a directed edge. If no such orientation exists, then $G$ is said to be a \emph{non-comparability graph}.

Equivalently, a graph is a
comparability graph if its vertices can be endowed with a partial order such that two
vertices are adjacent precisely when they are comparable. 
\end{definition}

\begin{figure}[htbp]
\centering

\subfloat[]{
\begin{tikzpicture}[scale=0.44, transform shape, every node/.style={font=\Large}]
    \tikzset{vertex/.style = {shape=circle}}
    
    \node[vertex,fill=black,text=white,minimum size=0.35cm,label=left:{$1$}] (1) at (-1.5,-1.5) {};
    \node[vertex,fill=black,text=white,minimum size=0.35cm,label=above:{$2$}] (2) at (0.15,0) {};
    \node[vertex,fill=black,text=white,minimum size=0.35cm,label=right:{$3$}] (3) at (1.5,-1.5) {};
    \node[vertex,fill=black,text=white,minimum size=0.35cm,label=left:{$4$}] (4) at (-1.5,-3.5) {};
    \node[vertex,fill=black,text=white,minimum size=0.35cm,label=right:{$5$}] (5) at (1.5,-3.5) {};

    \draw[thick] (1) -- (2); 
    \draw[thick] (2) -- (3);
    \draw[thick] (3) -- (5);
    \draw[thick] (1) -- (4);      
    \draw[thick] (5) -- (4);    
    \draw[thick] (1) -- (3);  
\end{tikzpicture}
}
\hfill
\subfloat[]{
\begin{tikzpicture}[scale=0.44, transform shape, every node/.style={font=\Large}]
    \tikzset{vertex/.style = {shape=circle}}

    \node[vertex,fill=black,text=white,minimum size=0.35cm,label=left:{$1$}] (1) at (-1.5,-1.5) {};
    \node[vertex,fill=black,text=white,minimum size=0.35cm,label=above:{$2$}] (2) at (0.15,0) {};
    \node[vertex,fill=black,text=white,minimum size=0.35cm,label=right:{$3$}] (3) at (1.5,-1.5) {};
    \node[vertex,fill=black,text=white,minimum size=0.35cm,label=left:{$4$}] (4) at (-1.5,-3.5) {};
    \node[vertex,fill=black,text=white,minimum size=0.35cm,label=right:{$5$}] (5) at (1.5,-3.5) {};

    \draw[thick] (1) -- (2); 
    \draw[thick] (2) -- (3);
    \draw[thick] (3) -- (5);
    \draw[thick] (1) -- (4);      
    \draw[thick] (5) -- (4);    
\end{tikzpicture}
}
\hfill
\subfloat[]{
\begin{tikzpicture}[scale=0.44, transform shape, every node/.style={font=\Large}]
    \tikzset{
        vertex/.style = {shape=circle},
        edge/.style = {
            thick,
            postaction={decorate},
            decoration={
                markings,
                mark=at position 0.5 with {\arrow{>}}
            }
        }
    }

    \node[vertex,fill=black,text=white,minimum size=0.35cm,label=left:{$1$}] (1) at (-1.5,-1.5) {};
    \node[vertex,fill=black,text=white,minimum size=0.35cm,label=above:{$2$}] (2) at (0.15,0) {};
    \node[vertex,fill=black,text=white,minimum size=0.35cm,label=right:{$3$}] (3) at (1.5,-1.5) {};
    \node[vertex,fill=black,text=white,minimum size=0.35cm,label=left:{$4$}] (4) at (-1.5,-3.5) {};
    \node[vertex,fill=black,text=white,minimum size=0.35cm,label=right:{$5$}] (5) at (1.5,-3.5) {};

    \draw[edge] (2) -- (1);
    \draw[edge] (2) -- (3);
    \draw[edge] (3) -- (1);
    \draw[edge] (3) -- (5);
    \draw[edge] (4) -- (5);
    \draw[edge] (4) -- (1);
\end{tikzpicture}
}

\caption{(a) The graph $G_1$, which is a comparability graph. 
(b) The graph $G_2$, which is not a comparability graph. 
(c) A transitive orientation of $G_1$.}
\label{fig:compandnon-comp}
\end{figure}
\begin{example}
Graph $G_1$ in Figure~\ref{fig:compandnon-comp} is a comparability graph, as it admits the transitive orientation illustrated in Figure~\ref{fig:compandnon-comp}(c), whereas $G_2$ is not a comparability graph.
\end{example}

Note that any bipartite graph is a comparability graph. The following theorem establishes a connection between word-representable graphs and comparability graphs.

\begin{theorem}[\cite{onrepgraphs}]
\label{lemma_comp_wrg}
Let $G$ be a graph on $n$ vertices, and let $x \in V(G)$ be a vertex of degree $n-1$ (that is, $x$ is adjacent to every other vertex of $G$). Let $H = G - x$ be the induced subgraph obtained by deleting $x$. Then $G$ is word-representable if and only if $H$ is a comparability graph.
\end{theorem}

Comparability graphs form a hereditary class. 

\begin{definition}
A \emph{minimal forbidden induced subgraph} of a comparability graph is called a \emph{minimal non-comparability graph}; that is, it is a graph which is not a comparability graph, but all of its proper induced subgraphs are comparability graphs.
\end{definition}

\begin{example}An example of a minimal non-comparability graph is the cycle $C_5$, which is depicted as $G_2$ in Figure~\ref{fig:compandnon-comp}. Since $C_5$ does not admit a transitive orientation, it is not a comparability graph, while all of its proper induced subgraphs are comparability graphs.
\end{example}

The class of all minimal non-comparability graphs was characterized by Gallai~\cite{Gallai1967TransitivOG}. In his work, the classification is presented in two parts: one part consists of explicitly described graphs and graph families, while the other describes certain graphs as complements of specified graphs or families. In this paper, we explicitly compute these complements and present a unified list that includes both individual graphs and graph families. For some families, schematic representations are used for clarity, since explicit drawings can be difficult to convey. The unified list of all minimal non-comparability graphs is illustrated in Figure~\ref{fig:min-non-comp-all}.

\begin{figure}[htbp]
    \centering
    \subfloat[$G_{n}^{1}$ $;$ $n \geq 2$]{
\begin{tikzpicture}[scale=0.27, transform shape, every node/.style={font=\Huge}]

\node[draw, fill=black, circle,minimum size=0.4cm,  text=white, label=right:{$1$}] (1) at (2, 0) {};
\node[draw, fill=black, circle, minimum size=0.4cm, text=white, label=right:{$2$}] (2) at (3.5, 1.65) {};
\node[draw, fill=black, circle, minimum size=0.4cm, text=white, label=right:{$3$}] (3) at (2.9, 3.7) {};
\node[draw, fill=black, circle, minimum size=0.4cm, text=white] (r) at (0.5, 5.5) {};
\node[draw, fill=black, circle, minimum size=0.4cm, text=white] (r1) at (-1.9, 3.7) {};
\node[draw, fill=black, circle, minimum size=0.4cm, text=white, label=left:{$2n$}] (2n) at (-2.5, 1.65) {};
\node[draw, fill=black, circle, minimum size=0.4cm, text=white, label=left:{$2n+1$}] (2n+1) at (-1, 0) {};

\draw[-] (1) -- (2);
\draw[-] (2) -- (3);
\draw[-] (3) -- (r) node[midway, sloped, above, yshift=5pt] { . . . . . .}; 
\draw[-] (r) -- (r1);
\draw[-] (r1) -- (2n);
\draw[-] (2n) -- (2n+1);
\draw[-] (2n+1) -- (1);

\end{tikzpicture}

 \label{g1n}
}
\hspace{0.15cm}
\subfloat[$G_{n}^{2}$ $;$ $n \geq 2$]{
 
\begin{tikzpicture}[scale=0.27, transform shape, every node/.style={font=\Huge}]

\node[draw, fill=black, circle,minimum size=0.4cm,  text=white, label=below:{$1$}] (1) at (0, -2) {};
\node[draw, fill=black, circle,minimum size=0.4cm,  text=white, label=right:{$2$}] (2) at (2.8, 0) {};
\node[draw, fill=black, circle,minimum size=0.4cm,  text=white, label=left:{$2n+1$}] (2n+1) at (-2.8, 0) {};
\node[draw, fill=black, circle,minimum size=0.4cm,  text=white, label=right:{$3$}] (3) at (3, 2.3) {};
\node[draw, fill=black, circle,minimum size=0.4cm,  text=white, label=right:{$4$}] (4) at (1.5, 4) {};
\node[draw, fill=black, circle,minimum size=0.4cm,  text=white] (r) at (-1.5, 4) {};
\node[draw, fill=black, circle,minimum size=0.4cm,  text=white, label=left:{$2n$}] (2n) at (-3, 2.3) {};
\node[draw, fill=black, circle,minimum size=0.4cm,  text=white, label=left:{$x$}] (x) at (-4, -2) {};
\node[draw, fill=black, circle,minimum size=0.4cm,  text=white, label=right:{$y$}] (y) at (4, -2) {};

\draw[-] (1) -- (2);
\draw[-] (2) -- (3);
\draw[-] (3) -- (4);
\draw[-] (4) -- (r) node[midway, above, yshift=5pt] {. . . . . . .};
\draw[-] (r) -- (2n);
\draw[-] (2n) -- (2n+1);
\draw[-] (2n+1) -- (1);
\draw[-] (1) -- (3);
\draw[-] (1) -- (4);
\draw[-] (1) -- (4);
\draw[-] (1) -- (r);
\draw[-] (1) -- (2n);
\draw[-] (1) -- (2n+1);
\draw[-] (2n+1) -- (x);
\draw[-] (2) -- (y);

\end{tikzpicture}
    \label{g2n}
}
\hspace{0.15cm}
\subfloat[$G_{n}^{3}$ $;$ $n \geq 3$]{
     
\begin{tikzpicture}[scale=0.26, transform shape, every node/.style={font=\Huge}]

\node[draw, fill=black, circle,minimum size=0.4cm,  text=white, label=right:{$1$}] (1) at (1.15, -2.5) {};
\node[draw, fill=black, circle,minimum size=0.4cm,  text=white, label=right:{$2$}] (2) at (3.5, -0.25) {};
\node[draw, fill=black, circle,minimum size=0.4cm,  text=white, label=left:{$2n-1$}] (2n-1) at (-4, -0.25) {};
\node[draw, fill=black, circle,minimum size=0.4cm,  text=white, label=right:{$3$}] (3) at (3.37, 2.3) {};
\node[draw, fill=black, circle,minimum size=0.4cm,  text=white] (r1) at (-3.9, 2.3) {};
\node[draw, fill=black, circle,minimum size=0.4cm,  text=white, label=right:{$4$}] (4) at (1.2, 4) {};
\node[draw, fill=black, circle,minimum size=0.4cm,  text=white] (r) at (-1.7, 4) {};
\node[draw, fill=black, circle,minimum size=0.4cm,  text=white, label=left:{$2n$}] (2n) at (-1.65, -2.5) {};
\node[draw, fill=black, circle,minimum size=0.4cm,  text=white, label=left:{$x$}] (x) at (-4.6, -4.9) {};
\node[draw, fill=black, circle,minimum size=0.4cm,  text=white, label=right:{$y$}] (y) at (4.1, -4.9) {};

\draw[-] (1) -- (2);
\draw[-] (2) -- (3);
\draw[-] (3) -- (4);
\draw[-] (4) -- (r) node[midway, above, yshift=5pt] {. . . . . . .};
\draw[-] (1) -- (3);
\draw[-] (1) -- (4);
\draw[-] (1) -- (4);
\draw[-] (1) -- (r);
\draw[-] (1) -- (r);
\draw[-] (1) -- (r1);
\draw[-] (1) -- (2n-1);
\draw[-] (1) -- (r);
\draw[-] (2n) -- (2);
\draw[-] (2n) -- (3);
\draw[-] (2n) -- (4);
\draw[-] (2n) -- (r);
\draw[-] (2n) -- (r1);
\draw[-] (2n) -- (2n-1);
\draw[-] (2n) -- (2);
\draw[-] (r) -- (r1);
\draw[-] (2n-1) -- (r1);
\draw[-] (2n-1) -- (x);
\draw[-] (2) -- (y);
\draw[-] (2n) -- (y);
\draw[-] (1) -- (x);

\end{tikzpicture}

 \label{g3n}
}
 \hspace{0.15cm}
\subfloat[$G_{n}^{4}$ $;$ $n \geq 3$]{
  
  \begin{tikzpicture}[scale=0.26, transform shape, every node/.style={font=\Huge}]

\node[draw, fill=black, circle,minimum size=0.4cm,  text=white, label=right:{$1$}] (1) at (1.15, -2.5) {};
\node[draw, fill=black, circle,minimum size=0.4cm,  text=white, label=right:{$2$}] (2) at (3.5, -0.15) {};
\node[draw, fill=black, circle,minimum size=0.4cm,  text=white, label=left:{$2n$}] (2n) at (-4, -0.15) {};
\node[draw, fill=black, circle,minimum size=0.4cm,  text=white, label=right:{$3$}] (3) at (3.37, 2.3) {};
\node[draw, fill=black, circle,minimum size=0.4cm,  text=white] (r1) at (-3.9, 2.3) {};
\node[draw, fill=black, circle,minimum size=0.4cm,  text=white, label=right:{$4$}] (4) at (1.2, 4) {};
\node[draw, fill=black, circle,minimum size=0.4cm,  text=white] (r) at (-1.7, 4) {};
\node[draw, fill=black, circle,minimum size=0.4cm,  text=white, label=left:{$2n+1$}] (2n+1) at (-1.65, -2.5) {};
\node[draw, fill=black, circle,minimum size=0.4cm,  text=white, label=left:{$x$}] (x) at (-4.55, -4.9) {};
\node[draw, fill=black, circle,minimum size=0.4cm,  text=white, label=right:{$y$}] (y) at (4.05, -4.9) {};

\draw[-] (1) -- (2);
\draw[-] (2) -- (3);
\draw[-] (3) -- (4);
\draw[-] (4) -- (r) node[midway, above, yshift=5pt] {. . . . . . .};
\draw[-] (1) -- (3);
\draw[-] (1) -- (4);
\draw[-] (1) -- (4);
\draw[-] (1) -- (r);
\draw[-] (1) -- (r);
\draw[-] (1) -- (r1);
\draw[-] (1) -- (2n);
\draw[-] (1) -- (r);
\draw[-] (1) -- (2n+1);
\draw[-] (2n+1) -- (2);
\draw[-] (2n+1) -- (3);
\draw[-] (2n+1) -- (4);
\draw[-] (2n+1) -- (r);
\draw[-] (2n+1) -- (r1);
\draw[-] (2n+1) -- (2n);
\draw[-] (2n+1) -- (2);
\draw[-] (r) -- (r1);
\draw[-] (2n) -- (r1);
\draw[-] (2n) -- (x);
\draw[-] (2) -- (y);
\draw[-] (2n+1) -- (y);
\draw[-] (1) -- (x);

\end{tikzpicture}
 \label{g4n}
}
    \vspace{0.1cm}
    
    \subfloat[$G_{n}^{5}$ $;$ $n \geq 3$]{
        \centering
        \begin{tikzpicture}[scale=0.3, transform shape, every node/.style={font=\Huge}]
            \node[draw, fill=black, circle,minimum size=0.4cm,  text=white, label=right:{$a_{1}$}] (a1) at (0, 4) {};
            \node[draw, fill=black, circle,minimum size=0.4cm,  text=white, label=right:{$b_{1}$}] (b1) at (2.2, 2.6) {};
            \node[draw, fill=black, circle,minimum size=0.4cm,  text=white, label=right:{$a_{2}$}] (a2) at (2.95, 0.75) {};
            \node[draw, fill=black, circle,minimum size=0.4cm,  text=white, label=right:{$b_{2}$}] (b2) at (1.7, -0.7) {};
            \node[draw, fill=black, circle,minimum size=0.4cm,  text=white] (r) at (0, -2) {};
            \node[below= 0.1cm of r] {....};
            \node[draw, fill=black, circle,minimum size=0.4cm,  text=white] (r1) at (-1.7, -0.7) {};
            \node[draw, fill=black, circle,minimum size=0.4cm,  text=white, label=left:{$a_{n}$}] (an) at (-2.95, 0.75) {};
            \node[draw, fill=black, circle,minimum size=0.4cm,  text=white, label=left:{$b_{n}$}] (bn) at (-2.2, 2.6) {};
            
            \draw[-, dashed] (a1) -- (b1);
            \draw[-, dashed] (a2) -- (b1);
            \draw[-, dashed] (a2) -- (b2);
            \draw[-, dashed] (r) -- (b2);
            \draw[-, dashed] (r1) -- (r);
            \draw[-, dashed] (r1) -- (an);
            \draw[-, dashed] (an) -- (bn);
            \draw[-, dashed] (a1) -- (bn);
        \end{tikzpicture}
        \label{g5n}
    }
    \hspace{2cm}
    \subfloat[$G_{n}^{6}$ $;$ $n \geq 3$]{
        \centering
        \begin{tikzpicture}[scale=0.3, transform shape, every node/.style={font=\Huge}]
            \node[draw, fill=black, circle,minimum size=0.4cm,  text=white, label=right:{$a_{1}$}] (a1) at (0, 4) {};
            \node[draw, fill=black, circle,minimum size=0.4cm,  text=white, label=right:{$b_{1}$}] (b1) at (2.2, 2.6) {};
            \node[draw, fill=black, circle,minimum size=0.4cm,  text=white, label=right:{$a_{2}$}] (a2) at (2.95, 0.75) {};
            \node[draw, fill=black, circle,minimum size=0.4cm,  text=white, label=right:{$b_{2}$}] (b2) at (1.7, -0.7) {};
            \node[draw, fill=black, circle,minimum size=0.4cm,  text=white] (r) at (0, -2) {};
            \node[below= 0.1cm of r] {....};
            \node[draw, fill=black, circle,minimum size=0.4cm,  text=white, label=left:{$a_{n}$}] (an) at (-1.7, -0.7) {};
            \node[draw, fill=black, circle,minimum size=0.4cm,  text=white, label=left:{$b_{n}$}] (bn) at (-2.95, 0.75) {};
            \node[draw, fill=black, circle,minimum size=0.4cm,  text=white, label=left:{$c_{0}$}] (c0) at (-2.2, 2.6) {};
            
            \draw[-, dashed] (a1) -- (c0);
            \draw[-, dashed] (bn) -- (c0);
            \draw[-, dashed] (a2) -- (b1);
            \draw[-, dashed] (a2) -- (b2);
            \draw[-, dashed] (r) -- (b2);
            \draw[-, dashed] (an) -- (r);
            \draw[-, dashed] (bn) -- (an);
            \draw[-, dashed] (a1) -- (b1);
        \end{tikzpicture}
        \label{g6n}
    }

\vspace{0.1cm}
   
    \subfloat[$G_{n}^{7}$ $;$ $n \geq 1$]{
        \centering
       \begin{tikzpicture}[scale=0.34, transform shape, every node/.style={font=\Huge}]
            \node[draw, fill=black, circle, minimum size=0.4cm,  text=white, label=above:{$a$}] (a) {};
            \node[draw, fill=black, circle, minimum size=0.4cm, right=of a, text=white, label=above:{$b$} ] (b) {};
            \node[draw, fill=black, circle, minimum size=0.4cm, right=of b, text=white, label=above:{$c$} ] (c) {};
            \node[draw, fill=black, circle, minimum size=0.4cm, right=of c, text=white, label=above:{$d$} ] (d) {};

            \node[draw,fill=black, circle, minimum size=0.4cm, below left=of a, xshift= 1.1cm, yshift=-2cm, label=below:{$1$} ] (1) {};
            \node[draw, circle,  below left=of b, minimum size=0.4cm, xshift= 1.1cm,  yshift=-2cm, fill=black,  text=white, label=below:{$2$}] (2) {};

            \node[draw, circle, below left=of c, minimum size=0.4cm, xshift= 1.1cm, yshift=-2cm, fill=black,  text=white] (r) {}; 

            \node[above= 0.1cm of r] {....}; 

            \node[draw, circle, below left=of d, minimum size=0.4cm, xshift= 1.1cm, yshift=-2cm, fill=black,  text=white, label=below:{$n \mathnormal{-} 1$}] (n-1) {}; 

            \node[draw, circle, below=of d, minimum size=0.4cm, xshift= 1.1cm, yshift=-1.9cm, fill=black,  text=white, label=below:{$n$}] (n) {}; 

            \node[draw, circle,  below=of r, minimum size=0.4cm, yshift=-2cm, fill=black,  text=white, label=below:{$x$}] (x) {};

            \draw[-, dashed] (1) -- (2);
            \draw[-, dashed] (n-1) -- (r);
            \draw[-, dashed] (2) -- (r);
            \draw[-, dashed] (n-1) -- (n);
            \draw[-, dashed] (a) -- (b);
            \draw[-, dashed] (b) -- (c);
            \draw[-, dashed] (c) -- (d);

            \draw[-] (x) -- (1);
            \draw[-] (x) -- (2);
            \draw[-] (x) -- (n-1);
            \draw[-] (x) -- (n);
            \draw[-] (x) -- (r);

            \draw[-] (a) -- (n);
            \draw[-] (a) -- (2);
            \draw[-] (a) -- (n-1);
            \draw[-] (a) -- (r);

            \draw[-] (d) -- (1);
            \draw[-] (d) -- (2);
            \draw[-] (d) -- (n-1);
            \draw[-] (d) -- (r);



  \draw[-, bend left=90] (x) to (a);  
           \draw[-, bend right=90] (x) to (d);

        \end{tikzpicture}
        \label{g7n}
    }
    \hspace{0.24cm}
     \subfloat[$G_{n}^{8}$ $;$ $n \geq 1$]{
        \centering
        \begin{tikzpicture}[scale=0.34, transform shape, every node/.style={font=\Huge}]
            \node[draw, fill=black, circle, minimum size=0.4cm,  text=white, label=above:{$a$}] (a) {};
            \node[draw, fill=black, circle, minimum size=0.4cm, right=of a, text=white, label=above:{$b$} ] (b) {};
            \node[draw, fill=black, circle, minimum size=0.4cm, right=of b, text=white, label=above:{$c$} ] (c) {};
            \node[draw, fill=black, circle, minimum size=0.4cm, right=of c, text=white, label=above:{$d$} ] (d) {};

            \node[draw,fill=black, circle, minimum size=0.4cm, below left=of a, xshift= 1.1cm, yshift=-2cm, label=below:{$1$} ] (1) {};
            \node[draw, circle,  below left=of b, minimum size=0.4cm, xshift= 1.1cm,  yshift=-2cm, fill=black,  text=white, label=below:{$2$}] (2) {};

            \node[draw, circle, below left=of c, minimum size=0.4cm, xshift= 1.1cm, yshift=-2cm, fill=black,  text=white] (r) {}; 

            \node[above= 0.1cm of r] {....}; 

            \node[draw, circle, below left=of d, minimum size=0.4cm, xshift= 1.1cm, yshift=-2cm, fill=black,  text=white, label=below:{$n \mathnormal{-} 1$}] (n-1) {}; 

            \node[draw, circle, below=of d, minimum size=0.4cm, xshift= 1.1cm, yshift=-1.9cm, fill=black,  text=white, label=below:{$n$}] (n) {}; 

            \node[draw, circle,  below=of r, minimum size=0.4cm, yshift=-2cm, fill=black,  text=white, label=below:{$x$}] (x) {};

            \draw[-, dashed] (1) -- (2);
            \draw[-, dashed] (n-1) -- (r);
            \draw[-, dashed] (2) -- (r);
            \draw[-, dashed] (n-1) -- (n);
            \draw[-, dashed] (a) -- (b);
            \draw[-, dashed] (c) -- (d);

            \draw[-] (x) -- (1);
            \draw[-] (x) -- (2);
            \draw[-] (x) -- (n-1);
            \draw[-] (x) -- (n);
            \draw[-] (x) -- (r);

            \draw[-] (a) -- (n);
            \draw[-] (a) -- (2);
            \draw[-] (a) -- (n-1);
            \draw[-] (a) -- (r);

            \draw[-] (d) -- (1);
            \draw[-] (d) -- (2);
            \draw[-] (d) -- (n-1);
            \draw[-] (d) -- (r);

            \draw[-, bend left=90] (x) to (a);  
            \draw[-, bend right=90] (x) to (d); 

        \end{tikzpicture}
        \label{g8n}
    }
\hspace{0.24cm}
    \subfloat[$G_{n}^{9}$ $;$ $n \geq 2$]{
        \centering
        \begin{tikzpicture}[scale=0.34, transform shape, every node/.style={font=\Huge}]
            \node[draw, circle,  minimum size=0.4cm, fill=black,  text=white, label=above:{$a$}] (a) {};
            \node[draw, circle,  right=of a, minimum size=0.4cm, fill=black,  text=white, label=above:{$1$}] (1) {};
            \node[draw, circle,  right=of 1, minimum size=0.4cm, fill=black,  text=white, label=above:{$2$} ] (2) {};
            \node[draw, circle,  right=of 2, minimum size=0.4cm, fill=black,  text=white] (r) {};
            \node[above= 0.1cm of r] {....}; 
            \node[draw, circle,  right=of r, minimum size=0.4cm, fill=black,  text=white, label=above:{$n$}] (n) {};
            \node[draw, circle,  right=of n, minimum size=0.4cm, fill=black,  text=white, label=above:{$b$}] (b) {};

            \node[draw, circle,  above left=of r, minimum size=0.4cm, yshift=2cm, xshift=0.6cm, fill=black,  text=white, label=above:{$c$}] (c) {};

            \node[draw, circle,  below left=of r, minimum size=0.4cm, xshift=0.6cm, yshift=-1.5cm, fill=black,  text=white, label=below:{$d$}] (d) {};

            \draw[-, dashed] (a) -- (1);
            \draw[-, dashed] (1) -- (2);
            \draw[-, dashed] (2) -- (r);
            \draw[-, dashed] (r) -- (n);
            \draw[-, dashed] (n) -- (b);

            \draw[-] (d) -- (1);
            \draw[-] (d) -- (2);
            \draw[-] (d) -- (n);
            \draw[-] (d) -- (r);
            \draw[-] (d) -- (a);
            \draw[-] (d) -- (b);

            \draw[-] (a) -- (c);
            \draw[-] (c) -- (b);

        \end{tikzpicture}
        \label{g9n}
    }

\vspace{0.1cm}
   
  \subfloat[$H_{1}$]{
        \begin{tikzpicture}[scale=0.21, transform shape, every node/.style={font=\large}]
        \node[draw, fill=black, circle,minimum size=0.4cm,  text=white] (1) at (0, 0) {};
        \node[draw, fill=black, circle,minimum size=0.4cm,  text=white] (2) at (-6, 2.5) {};
        \node[draw, fill=black, circle,minimum size=0.4cm,  text=white] (3) at (2.2, -3) {};
        \node[draw, fill=black, circle,minimum size=0.4cm,  text=white] (4) at (-3.5, 1) {};
        \node[draw, fill=black, circle,minimum size=0.4cm,  text=white] (5) at (3.5, 1) {};
        \node[draw, fill=black, circle,minimum size=0.4cm,  text=white] (6) at (0, 3.5) {};
        \node[draw, fill=black, circle,minimum size=0.4cm,  text=white] (7) at (-2.2, -3) {};

        \draw[-] (1) -- (3);
        \draw[-] (1) -- (4);
        \draw[-] (1) -- (5);
        \draw[-] (1) -- (6);
        \draw[-] (1) -- (7);
        \draw[-] (2) -- (4);
        \draw[-] (2) -- (5);
        \draw[-] (2) -- (6);
        \draw[-] (2) -- (7);
        \draw[-] (3) -- (1);
        \draw[-] (3) -- (5);
        \draw[-] (3) -- (7);
        \draw[-] (4) -- (1);
        \draw[-] (4) -- (2);
        \draw[-] (4) -- (6);
        \draw[-] (4) -- (7);
        \draw[-] (5) -- (1);
        \draw[-] (5) -- (2);
        \draw[-] (5) -- (3);
        \draw[-] (5) -- (6);
        \draw[-] (5) -- (7);
        \draw[-] (6) -- (1);
        \draw[-] (6) -- (2);
        \draw[-] (6) -- (4);
        \draw[-] (6) -- (5);
        \draw[-] (7) -- (1);
        \draw[-] (7) -- (2);
        \draw[-] (7) -- (3);
        \draw[-] (7) -- (4);
        \draw[-] (7) -- (5);
        \end{tikzpicture}
        \label{$H_{1}$}
   }
    \hspace{0.17cm}
    \subfloat[$H_{2}$]{
        \begin{tikzpicture}[scale=0.20, transform shape, every node/.style={font=\large}]
        \node[draw, fill=black, circle,minimum size=0.4cm,  text=white] (7) at (0, 0.2) {};
        \node[draw, fill=black, circle,minimum size=0.4cm,  text=white] (4) at (0, -1.9) {};
        \node[draw, fill=black, circle,minimum size=0.4cm,  text=white] (2) at (3, -3.5) {};
        \node[draw, fill=black, circle,minimum size=0.4cm,  text=white] (3) at (-4, 1) {};
        \node[draw, fill=black, circle,minimum size=0.4cm,  text=white] (1) at (4, 1) {};
        \node[draw, fill=black, circle,minimum size=0.4cm,  text=white] (5) at (0, 3.5) {};
        \node[draw, fill=black, circle,minimum size=0.4cm,  text=white] (6) at (-3, -3.5) {};

        \draw[-] (7) -- (3);
        \draw[-] (7) -- (1);
        \draw[-] (7) -- (5);
        \draw[-] (7) -- (6);
        \draw[-] (7) -- (2);
        \draw[-] (2) -- (5);
        \draw[-] (2) -- (6);
        \draw[-] (3) -- (5);
        \draw[-] (3) -- (6);
        \draw[-] (4) -- (2);
        \draw[-] (4) -- (6);
        \end{tikzpicture}
        \label{$H_{2}$}
    }
    \hspace{0.17cm}
    \subfloat[$H_{3}$]{
        \begin{tikzpicture}[scale=0.20, transform shape, every node/.style={font=\large}]
        \node[draw, fill=black, circle,minimum size=0.4cm,  text=white] (1) at (0, 0) {};
        \node[draw, fill=black, circle,minimum size=0.4cm,  text=white] (2) at (-6, -0.5) {};
        \node[draw, fill=black, circle,minimum size=0.4cm,  text=white] (6) at (2.2, -3) {};
        \node[draw, fill=black, circle,minimum size=0.4cm,  text=white] (7) at (-3.5, 1) {};
        \node[draw, fill=black, circle,minimum size=0.4cm,  text=white] (5) at (3.5, 1) {};
        \node[draw, fill=black, circle,minimum size=0.4cm,  text=white] (3) at (0, 3.5) {};
        \node[draw, fill=black, circle,minimum size=0.4cm,  text=white] (4) at (-2.2, -3) {};

        \draw[-] (1) -- (3);
        \draw[-] (1) -- (4);
        \draw[-] (1) -- (5);
        \draw[-] (1) -- (6);
        \draw[-] (1) -- (7);
        \draw[-] (2) -- (4);
        \draw[-] (2) -- (6);
        \draw[-] (2) -- (7);
        \draw[-] (3) -- (1);
        \draw[-] (3) -- (5);
        \draw[-] (3) -- (7);
        \draw[-] (4) -- (1);
        \draw[-] (4) -- (2);
        \draw[-] (4) -- (6);
        \draw[-] (4) -- (7);
        \draw[-] (5) -- (1);
        \draw[-] (5) -- (3);
        \draw[-] (5) -- (6);
        \draw[-] (6) -- (1);
        \draw[-] (6) -- (2);
        \draw[-] (6) -- (4);
        \draw[-] (6) -- (5);
        \draw[-] (7) -- (1);
        \draw[-] (7) -- (2);
        \draw[-] (7) -- (3);
        \draw[-] (7) -- (4);
        \draw[-] (7) -- (6);
        \end{tikzpicture}
        \label{$H_{3}$}
    }
    \hspace{0.17cm}
    \subfloat[$H_{4}$]{
        \begin{tikzpicture}[scale=0.20, transform shape, every node/.style={font=\large}]
       
        
\node[draw, fill=black, circle,minimum size=0.4cm,  text=white] (6) at (0, 0) {};
\node[draw, fill=black, circle,minimum size=0.4cm,  text=white] (4) at (-6, 1) {};
\node[draw, fill=black, circle,minimum size=0.4cm,  text=white] (2) at (2.2, -3) {};
\node[draw, fill=black, circle,minimum size=0.4cm,  text=white] (7) at (-3.5, 1) {};
\node[draw, fill=black, circle,minimum size=0.4cm,  text=white] (5) at (3.5, 1) {};
\node[draw, fill=black, circle,minimum size=0.4cm,  text=white] (1) at (0, 3.5) {};
\node[draw, fill=black, circle,minimum size=0.4cm,  text=white] (3) at (-2.2, -3) {};

\draw[-] (6) -- (1);
\draw[-] (6) -- (2);
\draw[-] (6) -- (3);
\draw[-] (6) -- (5);
\draw[-] (6) -- (7);

\draw[-] (2) -- (5);
\draw[-] (2) -- (6);
\draw[-] (2) -- (7);

\draw[-] (3) -- (4);
\draw[-] (3) -- (6);
\draw[-] (3) -- (7);

\draw[-] (4) -- (1);
\draw[-] (4) -- (3);
\draw[-] (4) -- (7);

\draw[-] (5) -- (1);
\draw[-] (5) -- (2);
\draw[-] (5) -- (6);

\draw[-] (6) -- (1);
\draw[-] (6) -- (2);
\draw[-] (6) -- (3);
\draw[-] (6) -- (5);
\draw[-] (6) -- (7);

\draw[-] (7) -- (1);
\draw[-] (7) -- (2);
\draw[-] (7) -- (3);
\draw[-] (7) -- (4);
\draw[-] (7) -- (6);

        \end{tikzpicture}
        \label{$H_{4}$}
    }
  \hspace{0.17cm}
    \subfloat[$H_{5}$]{
        \begin{tikzpicture}[scale=0.20, transform shape, every node/.style={font=\large}]

\node[draw, fill=black, circle,minimum size=0.4cm,  text=white] (6) at (0, 0) {};
\node[draw, fill=black, circle,minimum size=0.4cm,  text=white] (4) at (0, -2) {};
\node[draw, fill=black, circle,minimum size=0.4cm,  text=white] (2) at (2.2, -3) {};
\node[draw, fill=black, circle,minimum size=0.4cm,  text=white] (7) at (-3.5, 1) {};
\node[draw, fill=black, circle,minimum size=0.4cm,  text=white] (5) at (3.5, 1) {};
\node[draw, fill=black, circle,minimum size=0.4cm,  text=white] (1) at (0, 3.5) {};
\node[draw, fill=black, circle,minimum size=0.4cm,  text=white] (3) at (-2.2, -3) {};

\draw[-] (6) -- (1);
\draw[-] (6) -- (2);
\draw[-] (6) -- (3);
\draw[-] (6) -- (7);
\draw[-] (7) -- (5);

\draw[-] (1) -- (3);
\draw[-] (2) -- (3);

\draw[-] (2) -- (5);
\draw[-] (2) -- (6);

\draw[-] (3) -- (4);
\draw[-] (3) -- (6);

\draw[-] (4) -- (2);
\draw[-] (4) -- (3);

\draw[-] (5) -- (1);
\draw[-] (5) -- (2);

\draw[-] (6) -- (1);
\draw[-] (6) -- (2);
\draw[-] (6) -- (3);
\draw[-] (6) -- (5);
\draw[-] (6) -- (7);

\draw[-] (7) -- (1);

\draw[-] (7) -- (6);

        \end{tikzpicture}
        \label{$H_{5}$}
    }
    
    \vspace{0.3cm}
    
    \subfloat[$H_{6}$]{
        \begin{tikzpicture}[scale=0.20, transform shape, every node/.style={font=\large}]
       
\node[draw, fill=black, circle,minimum size=0.4cm,  text=white] (6) at (0, 0) {};
\node[draw, fill=black, circle,minimum size=0.4cm,  text=white] (4) at (0, -2) {};
\node[draw, fill=black, circle,minimum size=0.4cm,  text=white] (2) at (2.2, -3) {};
\node[draw, fill=black, circle,minimum size=0.4cm,  text=white] (7) at (-3.5, 1) {};
\node[draw, fill=black, circle,minimum size=0.4cm,  text=white] (5) at (3.5, 1) {};
\node[draw, fill=black, circle,minimum size=0.4cm,  text=white] (1) at (0, 3.5) {};
\node[draw, fill=black, circle,minimum size=0.4cm,  text=white] (3) at (-2.2, -3) {};

\draw[-] (6) -- (1);
\draw[-] (6) -- (2);
\draw[-] (6) -- (3);
\draw[-] (6) -- (7);

\draw[-] (1) -- (3);
\draw[-] (2) -- (3);

\draw[-] (2) -- (5);
\draw[-] (2) -- (6);

\draw[-] (3) -- (4);
\draw[-] (3) -- (6);

\draw[-] (4) -- (2);
\draw[-] (4) -- (3);

\draw[-] (5) -- (1);
\draw[-] (5) -- (2);

\draw[-] (6) -- (1);
\draw[-] (6) -- (2);
\draw[-] (6) -- (3);
\draw[-] (6) -- (5);
\draw[-] (6) -- (7);

\draw[-] (7) -- (1);

\draw[-] (7) -- (6);

        \end{tikzpicture}
        \label{$H_{6}$}
    }
\hspace{0.15cm}
\subfloat[$H_{7}$]{
        \begin{tikzpicture}[scale=0.20, transform shape, every node/.style={font=\large}]

\node[draw, fill=black, circle,minimum size=0.4cm,  text=white] (2) at (0, -0.5) {};
\node[draw, fill=black, circle,minimum size=0.4cm,  text=white] (1) at (0, 2.5) {};
\node[draw, fill=black, circle,minimum size=0.4cm,  text=white] (3) at (3, 0) {};
\node[draw, fill=black, circle,minimum size=0.4cm,  text=white] (4) at (-3, -2.5) {};
\node[draw, fill=black, circle,minimum size=0.4cm,  text=white] (5) at (0, -2.5) {};
\node[draw, fill=black, circle,minimum size=0.4cm,  text=white] (6) at (3, -2.5) {};
\node[draw, fill=black, circle,minimum size=0.4cm,  text=white] (7) at (0, -4.3) {};

\draw[-] (2) -- (1);
\draw[-] (1) -- (3);
\draw[-] (6) -- (1);
\draw[-] (4) -- (1);

\draw[-] (2) -- (4);
\draw[-] (2) -- (6);

\draw[-] (2) -- (5);
\draw[-] (3) -- (6);

\draw[-] (5) -- (4);
\draw[-] (4) -- (7);

\draw[-] (5) -- (7);
\draw[-] (6) -- (7);

        \end{tikzpicture}
        \label{$H_{7}$}
    }
     \hspace{0.15cm}
    \subfloat[$H_{8}$]{
        \begin{tikzpicture}[scale=0.20, transform shape, every node/.style={font=\large}]
       
\node[draw, fill=black, circle,minimum size=0.4cm,  text=white] (2) at (0, -0.5) {};
\node[draw, fill=black, circle,minimum size=0.4cm,  text=white] (1) at (0, 2.5) {};
\node[draw, fill=black, circle,minimum size=0.4cm,  text=white] (3) at (3, 0) {};
\node[draw, fill=black, circle,minimum size=0.4cm,  text=white] (4) at (-3, -2.5) {};
\node[draw, fill=black, circle,minimum size=0.4cm,  text=white] (5) at (0, -2.5) {};
\node[draw, fill=black, circle,minimum size=0.4cm,  text=white] (6) at (3, -2.5) {};
\node[draw, fill=black, circle,minimum size=0.4cm,  text=white] (7) at (0, -4.3) {};

\draw[-] (2) -- (1);
\draw[-] (1) -- (3);
\draw[-] (6) -- (1);
\draw[-] (4) -- (1);

\draw[-] (2) -- (4);
\draw[-] (2) -- (6);

\draw[-] (2) -- (5);
\draw[-] (3) -- (6);

\draw[-] (5) -- (4);
\draw[-] (4) -- (7);

\draw[-] (6) -- (7);

\end{tikzpicture}
     \label{$H_{8}$}
}
\hspace{0.15cm}
\subfloat[$H_{9}$]{
        \begin{tikzpicture}[scale=0.20, transform shape, every node/.style={font=\large}]

\node[draw, fill=black, circle,minimum size=0.4cm,  text=white] (2) at (0, -0.5) {};
\node[draw, fill=black, circle,minimum size=0.4cm,  text=white] (1) at (0, 2.5) {};
\node[draw, fill=black, circle,minimum size=0.4cm,  text=white] (3) at (3, 0) {};
\node[draw, fill=black, circle,minimum size=0.4cm,  text=white] (4) at (-3, -2.5) {};
\node[draw, fill=black, circle,minimum size=0.4cm,  text=white] (5) at (0, -2.5) {};
\node[draw, fill=black, circle,minimum size=0.4cm,  text=white] (6) at (3, -2.5) {};
\node[draw, fill=black, circle,minimum size=0.4cm,  text=white] (7) at (0, -4.3) {};

\draw[-] (2) -- (1);
\draw[-] (1) -- (3);
\draw[-] (6) -- (1);
\draw[-] (4) -- (1);

\draw[-] (2) -- (4);
\draw[-] (2) -- (6);

\draw[-] (2) -- (5);
\draw[-] (3) -- (6);

\draw[-] (5) -- (4);
\draw[-] (4) -- (7);

     \end{tikzpicture}
         \label{$H_{9}$}
    }
   \hspace{0.15cm}
    \subfloat[$H_{10}$]{
        \begin{tikzpicture}[scale=0.20, transform shape, every node/.style={font=\large}]
       

\node[draw, fill=black, circle,minimum size=0.4cm,  text=white] (2) at (0, -0.5) {};
\node[draw, fill=black, circle,minimum size=0.4cm,  text=white] (1) at (0, 2.5) {};
\node[draw, fill=black, circle,minimum size=0.4cm,  text=white] (3) at (3, 0) {};
\node[draw, fill=black, circle,minimum size=0.4cm,  text=white] (4) at (-3, -2.5) {};
\node[draw, fill=black, circle,minimum size=0.4cm,  text=white] (5) at (0, -2.5) {};
\node[draw, fill=black, circle,minimum size=0.4cm,  text=white] (6) at (3, -2.5) {};
\node[draw, fill=black, circle,minimum size=0.4cm,  text=white] (7) at (0, -4.3) {};

\draw[-] (2) -- (1);
\draw[-] (1) -- (3);
\draw[-] (6) -- (1);
\draw[-] (4) -- (1);

\draw[-] (2) -- (4);
\draw[-] (2) -- (6);

\draw[-] (2) -- (5);
\draw[-] (3) -- (6);

\draw[-] (5) -- (4);
\draw[-] (4) -- (7);

\draw[-] (5) -- (7);

        \end{tikzpicture}
         \label{$H_{10}$}
    } 
     \hspace{0.15cm}
  \subfloat[$H_{11}$]{
 \begin{tikzpicture}[scale=0.20, transform shape, every node/.style={font=\large}]
\node[draw, fill=black, circle,minimum size=0.4cm,  text=white] (1) at (1.15, -2.5) {};

\node[draw, fill=black, circle,minimum size=0.4cm,  text=white] (2) at (2.3, 0.6) {};

\node[draw, fill=black, circle,minimum size=0.4cm,  text=white] (4) at (-3.7, 0.6) {};

\node[draw, fill=black, circle,minimum size=0.4cm,  text=white] (3) at (-0.7, 3) {};

\node[draw, fill=black, circle,minimum size=0.4cm,  text=white] (5) at (-2.95, -2.5) {};

\node[draw, fill=black, circle,minimum size=0.4cm,  text=white] (6) at (-5.2, -4.7) {};

\node[draw, fill=black, circle,minimum size=0.4cm,  text=white] (7) at (3.4, -4.7) {};

\draw[-] (1) -- (2);
\draw[-] (2) -- (3);
\draw[-] (3) -- (4);

\draw[-] (1) -- (5);
\draw[-] (1) -- (3);
\draw[-] (5) -- (4);

\draw[-] (5) -- (2);
\draw[-] (5) -- (3);
\draw[-] (1) -- (4);

\draw[-] (1) -- (6);
\draw[-] (4) -- (6);

\draw[-] (2) -- (7);
\draw[-] (5) -- (7);

 \end{tikzpicture}
 \label{$H_{11}$}
  }
    \caption{List of all minimal non-comparability graphs}
    \label{fig:min-non-comp-all}
\end{figure}
\begin{remark}
In Figure~\ref{fig:min-non-comp-all}, several graph families are represented schematically. The classes $G_{n}^{5}$ and $G_{n}^{6}$ are depicted solely by their missing edges, indicated by dashed lines. For $G_{n}^{7}, G_{n}^{8},$ and $G_{n}^{9}$, the vertex set is partitioned into three horizontal layers:
\begin{itemize}
    \item For $G_{n}^{7}$ and $G_{n}^{8}$, the layers consist of the vertex sets $\{a, b, c, d\}$ (top), $\{1, 2, \dots, n\}$ (middle), and $\{x\}$ (bottom). 
    \item For $G_{n}^{9}$, the layers consist of $\{c\}$ (top), $\{a, 1, 2, \dots, n, b\}$ (middle), and $\{d\}$ (bottom).
\end{itemize}
Within each layer, dashed lines denote missing edges; specifically, the induced subgraph of each layer is a clique minus the set of dashed edges. All edges between vertices in different layers are drawn explicitly. While $H_{1}$ through $H_{11}$ are individual graphs, the $G_{n}^{i}$ notations represent infinite families of graphs parameterized by $n$.
\end{remark}

We now present the motivation for the problems studied in this paper.

\subsection{Motivation for the Problems}
\label{subsec-motivation}
Both word-representable graphs and comparability graphs are hereditary graph classes and, as such, they admit a characterization in terms of forbidden induced subgraphs. For comparability graphs, this characterization is known: the entire set of minimal non-comparability graphs has been identified by Gallai~\cite{Gallai1967TransitivOG}. In contrast, for word-representable graphs, the corresponding characterization remains open; that is, a complete description of the minimal non-word-representable graphs is not yet known. Determining this class, therefore, constitutes a natural and significant problem.

Since comparability graphs form a subclass of word-representable graphs, it is natural to investigate the relationship between their respective minimal forbidden induced subgraphs. (Observe that every minimal non-comparability graph is either word-representable or minimal non-word-representable.) In particular, one may ask whether there exist graphs that are simultaneously minimal non-comparability and minimal non-word-representable and, if so, how such graphs may be characterized. More generally, common minimal forbidden induced subgraphs provide insight into the structural similarities and differences between graph classes. While such comparisons are inherently qualitative, it is also natural to seek quantitative measures of the relationship between the two classes. This motivates the study of covering parameters, in particular the minimum number of comparability graphs whose union covers the edges of a given graph.  A precise definition is given in Section~\ref{covering-section}.

The problem we study, posed in~\cite{kenkireth2026word}, asks for the cover number of a word-representable graph by comparability graphs. For any word-representable graph $G$, its cover number by comparability graphs can be viewed as a measure of how far $G$ is from being a comparability graph; larger values indicate greater structural deviations. This parameter complements the study of minimal forbidden induced subgraphs by providing a quantitative perspective on the relationship between the two classes. Hence, we study these two problems in this paper.

\begin{figure}[htbp]
    \centering
    \resizebox{0.75\textwidth}{!}{%
        \begin{tikzpicture}[thick, font=\sffamily\small]
        
            \draw[black, fill=white] (-1, -1) rectangle (11.5, 6);
            \node[anchor=north west] at (-0.8, 5.7) {\textbf{All Graphs}};
        
            \draw[black, fill=gray!5] (3, 3) circle (2.6cm);
            \node[black] at (3, 4.35) {\textbf{Word-Representable}};
        
            \draw[black, fill=gray!20] (2, 2.5) circle (1.3cm);
            \node[black] at (1.8, 2.5) {\textbf{Comparability}};
        
            \draw[black, fill=gray!12] (5.3, 2.5) circle (1.3cm);
            
            \begin{scope}
                \clip (3, 3) circle (2.6cm) (-1, -1) rectangle (11.5, 7);
                
                \path[pattern={Lines[angle=45, distance=5pt, line width=0.6pt]}, pattern color=black] 
                    (5.3, 2.5) circle (1.3cm);
            \end{scope}
        
            \draw[black] (3, 3) circle (2.6cm);
            \draw[black] (5.3, 2.5) circle (1.3cm);
        
            \node[black, align=center] at (5.1, 2.5) {\textbf{Minimal Non-}\\\textbf{Comparability}};
        
            \node[black, align=center, anchor=west] at (6.15, 4.5) {\textbf{Minimal Non-Comparability \&}\\\textbf{Minimal Non-Word-Representable}\\\textbf{(Intersection Class)}};
            
            \draw[->, black, line width=1.2pt] (7.2, 4.0) to[out=210, in=45] (6.0, 3.1);
        
        \end{tikzpicture}%
    }
    \caption{Venn diagram depicting relations between graph classes: The non-word-representable region of the minimal non-comparability class forms exactly the intersection class of minimal non-comparability and minimal non-word-representable graphs.}
    \label{fig:graph_classes_venn}
\end{figure}

\section{Classification of Minimal Non-Comparability Graphs with Respect to Word-Representability}
\label{section-classification}
Figure~\ref{fig:min-non-comp-all} presents the complete set of minimal non-comparability graphs. 
This collection consists of two types: individual graphs \(H_i\) and infinite families of graphs \(G_n^i\) parameterized by \(n\). 
For each family \(G_n^i\), the parameter \(n\) is at least the lower bound indicated in the figure, which corresponds to the smallest instance in the family that is non-comparability.

In this section, we classify minimal non-comparability graphs according to their word-representability. 
The results are summarized in Theorems~\ref{wr-graphs-thm} and~\ref{non-wr-graphs-thm}. 
Theorem~\ref{wr-graphs-thm} identifies the word-representable graphs among the collection, while Theorem~\ref{non-wr-graphs-thm} identifies the non-word-representable ones, which form the intersection of minimal non-comparability graphs with minimal non-word-representable graphs.

\begin{theorem}
\label{wr-graphs-thm}
The minimal non-comparability graphs shown in Figure~\ref{fig:min-non-comp-all} that are word-representable are precisely the following:
\begin{enumerate}
    \item All graphs in the families \( G_{n}^{1}, G_{n}^{2}, G_{n}^{3}, G_{n}^{5}, G_{n}^{6}, G_{n}^{7}, G_{n}^{8} \).
    \item The graphs \( H_{2}, H_{3}, \dots, H_{11} \), together with \( G_{2}^{9} \).
\end{enumerate}
\end{theorem}

\begin{theorem}
\label{non-wr-graphs-thm}
The minimal non-comparability graphs shown in Figure~\ref{fig:min-non-comp-all} that are not word-representable are precisely the following:
\begin{enumerate}
    \item All graphs in the families \( G_{n}^{4} \) and \( G_{n}^{9} \) for \( n \ge 3 \).
    \item The graph \( H_{1} \).
\end{enumerate}
Equivalently, these are exactly the minimal non-comparability graphs that are also minimal non-word-representable graphs.
\end{theorem}

The proofs of Theorems~\ref{wr-graphs-thm} and~\ref{non-wr-graphs-thm} are completed in Section~3. Subsection~\ref{section-min-semi-trans} establishes that the graphs listed in Theorem~\ref{wr-graphs-thm} are word-representable, while Subsection~\ref{section-minimal-non-semi-trans} shows that the graphs listed in Theorem~\ref{non-wr-graphs-thm} are not word-representable. Together, these two subsections yield a complete classification of the minimal non-comparability graphs with respect to word-representability.

\begin{remark}
In a preliminary version of this work~\cite{kenkireth2026word}, the family \( G_{n}^{9} \) was stated to be non-word-representable for all \( n \ge 2 \). We clarify that the case \( n=2 \) is, in fact, word-representable. Accordingly, \( G_{2}^{9} \) is included among the word-representable graphs, while \( G_{n}^{9} \) for \( n \ge 3 \) are non-word-representable.
\end{remark}

\subsection{Minimal Non-Comparability Graphs that are Word-Representable}
\label{section-min-semi-trans}

In this subsection, we establish the word-representability of all graphs listed in Theorem~\ref{wr-graphs-thm} via a sequence of intermediate propositions.

\begin{proposition}
\label{3-color-graphs-prop}
Every graph in the families \( G_{n}^{1} \), \( G_{n}^{2} \), and \( G_{n}^{3} \), as shown in Figure~\ref{fig:min-non-comp-all}, is word-representable.
\end{proposition}
\begin{proof}
Consider an arbitrary graph \( G \in G_{n}^{1} \), where \( n \geq 2 \). The vertex set of \( G \) admits a partition into three independent sets \( A \), \( B \), and \( C \), where
\( A = \{1,3,\ldots,2n-1\} \), \( B = \{2,4,\ldots,2n\} \), and \( C = \{2n+1\} \). Similarly, for any graph \( G \in G_{n}^{2} \), where \( n \geq 2 \), the vertex set admits a partition into three independent sets \( A \), \( B \), and \( C \), where
\( A = \{1, x, y\} \), \( B = \{2,4,6,\ldots,2n\} \), and \( C = \{3,5,7,\ldots,2n+1\} \). Likewise, for any graph \( G \in G_{n}^{3} \), where \( n \geq 3 \), the vertex set admits a partition into three independent sets \( A \), \( B \), and \( C \), where
\( A = \{1,2n\} \), \( B = \{2,4,6,\ldots,2n-2,x\} \), and \( C = \{3,5,7,\ldots,2n-1,y\} \).

Therefore, every graph in these families is 3-colorable. Figure~\ref{fig:3-colo-graphs} illustrates the 3-colorability of these graph classes.
\begin{figure}[htbp]
    \centering
    \subfloat[$G_{n}^{1}$ $;$ $n \geq 2$]{
\begin{tikzpicture}[scale=0.37, transform shape, every node/.style={font=\Huge}]

\node[draw, circle, fill=red, text=white, minimum size=0.4cm, label=right:{$1$}] (1) at (2, 0) {};
\node[draw, circle, fill=blue, text=white, minimum size=0.4cm, label=right:{$2$}] (2) at (3.5, 1.65) {};
\node[draw, circle, fill=red, text=white, minimum size=0.4cm, label=right:{$3$}] (3) at (2.9, 3.7) {};
\node[draw, circle, fill=white, minimum size=0.4cm] (r) at (0.5, 5.5) {};
\node[draw, circle, fill=white, minimum size=0.4cm] (r1) at (-1.9, 3.7) {};
\node[draw, circle, fill=blue, text=white, minimum size=0.4cm, label=left:{$2n$}] (2n) at (-2.5, 1.65) {};
\node[draw, circle, fill=brown, text=white, minimum size=0.4cm, label=left:{$2n+1$}] (2n+1) at (-1, 0) {};

\draw[-] (1) -- (2);
\draw[-] (2) -- (3);
\draw[-] (3) -- (r) node[midway, sloped, above, yshift=5pt] {. . . . . . .}; 
\draw[-] (r) -- (r1);
\draw[-] (r1) -- (2n);
\draw[-] (2n) -- (2n+1);
\draw[-] (2n+1) -- (1);

\end{tikzpicture}

\label{fig:3-colo-g1n}
}
\hfill
\subfloat[$G_{n}^{2}$ $;$ $n \geq 2$]{
 
\begin{tikzpicture}[scale=0.37, transform shape, every node/.style={font=\Huge}]

\node[draw, circle, fill=red, text=white, minimum size=0.4cm, label=below:{$1$}] (1) at (0, -2) {};
\node[draw, circle, fill=blue, text=white, minimum size=0.4cm, label=right:{$2$}] (2) at (2.7, 0) {};
\node[draw, circle, fill=brown, text=white, minimum size=0.4cm, label=left:{$2n+1$}] (2n+1) at (-2.7, 0) {};
\node[draw, circle, fill=brown, text=white, minimum size=0.4cm, label=right:{$3$}] (3) at (3, 2.3) {};
\node[draw, circle, fill=blue, text=white, minimum size=0.4cm, label=right:{$4$}] (4) at (1.5, 4) {};
\node[draw, circle, fill=white, minimum size=0.4cm] (r) at (-1.5, 4) {};
\node[draw, circle, fill=blue, text=white, minimum size=0.4cm, label=left:{$2n$}] (2n) at (-3, 2.3) {};
\node[draw, circle, fill=red, text=white, minimum size=0.4cm, label=left:{$x$}] (x) at (-4, -2) {};
\node[draw, circle, fill=red, text=white, minimum size=0.4cm, label=right:{$y$}] (y) at (4, -2) {};

\draw[-] (1) -- (2);
\draw[-] (2) -- (3);
\draw[-] (3) -- (4);
\draw[-] (4) -- (r) node[midway, above, yshift=5pt] {. . . . . . .};
\draw[-] (r) -- (2n);
\draw[-] (2n) -- (2n+1);
\draw[-] (2n+1) -- (1);
\draw[-] (1) -- (3);
\draw[-] (1) -- (4);
\draw[-] (1) -- (r);
\draw[-] (1) -- (2n);
\draw[-] (1) -- (2n+1);
\draw[-] (2n+1) -- (x);
\draw[-] (2) -- (y);

\end{tikzpicture}

\label{fig:3-colo-g2n}
}
      \hfill
\subfloat[$G_{n}^{3}$ $;$ $n \geq 3$]{
     
\begin{tikzpicture}[scale=0.37, transform shape, every node/.style={font=\Huge}]

\node[draw, circle, fill=red, text=white, minimum size=0.4cm, label=below:{$1$}] (1) at (1.15, -2.5) {};
\node[draw, circle, fill=blue, text=white, minimum size=0.4cm, label=right:{$2$}] (2) at (3.4, 0) {};
\node[draw, circle, fill=brown, text=white, minimum size=0.4cm, label=left:{$2n-1$}] (2n-1) at (-3.9, 0) {};
\node[draw, circle, fill=brown, text=white, minimum size=0.4cm, label=right:{$3$}] (3) at (3.37, 2.3) {};
\node[draw, circle, fill=white, minimum size=0.4cm] (r1) at (-3.9, 2.3) {};
\node[draw, circle, fill=blue, text=white, minimum size=0.4cm, label=right:{$4$}] (4) at (1.2, 4) {};
\node[draw, circle, fill=white, minimum size=0.4cm] (r) at (-1.7, 4) {};
\node[draw, circle, fill=red, text=white, minimum size=0.4cm, label=below:{$2n$}] (2n) at (-1.65, -2.5) {};
\node[draw, circle, fill=blue, text=white, minimum size=0.4cm, label=left:{$x$}] (x) at (-4.5, -4.9) {};
\node[draw, circle, fill=brown, text=white,minimum size=0.4cm, label=right:{$y$}] (y) at (4, -4.9) {};

\draw[-] (1) -- (2);
\draw[-] (2) -- (3);
\draw[-] (3) -- (4);
\draw[-] (4) -- (r) node[midway, above, yshift=5pt] {. . . . . . .};
\draw[-] (1) -- (3);
\draw[-] (1) -- (4);
\draw[-] (1) -- (r);
\draw[-] (1) -- (r1);
\draw[-] (1) -- (2n-1);
\draw[-] (2n) -- (2);
\draw[-] (2n) -- (3);
\draw[-] (2n) -- (4);
\draw[-] (2n) -- (r);
\draw[-] (2n) -- (r1);
\draw[-] (2n) -- (2n-1);
\draw[-] (r) -- (r1);
\draw[-] (2n-1) -- (r1);
\draw[-] (2n-1) -- (x);
\draw[-] (2) -- (y);
\draw[-] (2n) -- (y);
\draw[-] (1) -- (x);

\end{tikzpicture}

\label{fig:3-colo-g3n}
}
\caption{$3$-colorable minimal non-comparability graph classes}
\label{fig:3-colo-graphs}
\end{figure}
By Theorem~\ref{3-col-semi}, every 3-colorable graph is semi-transitive. Hence, all graphs in these families are word-representable. \qed
\end{proof}

\begin{proposition}
\label{gn5-semi-thm}
Every graph in the family \(G_{n}^{5}\), as shown in Figure~\ref{fig:min-non-comp-all}, is word-representable.
\end{proposition}

\begin{proof}
Consider an arbitrary graph \(G \in G_{n}^{5}\), with \(n \ge 3\), where the vertex set is partitioned as \(V(G) = A \cup B\), where \(A = \{a_1, \dots, a_n\}\) and \(B = \{b_1, \dots, b_n\}\). Both \(A\) and \(B\) induce cliques. The missing \(2n\) edges form a cycle.

To prove word-representability, we exhibit a semi-transitive orientation (Theorem~\ref{wrg=semi}). Orient each edge from the vertex with smaller index to the vertex with larger index. This orientation is clearly acyclic.  Suppose for a contradiction that there exists a directed path \(P = u_1 \to \cdots \to u_k\) violating semi-transitivity, so that \(u_1 \to u_k\) is an edge and some \(\{u_i, u_j\} \notin E(G)\) for \(1 \le i < j \le k\). Missing edges in \(G\) are of two types:
\begin{itemize}
    \item Type $1$: \(\{a_i, b_j\}\) with \(|i-j| \le 1\),
    \item Type $2$: \(\{a_1, b_n\}\).
\end{itemize}

We show that none of the missing edges in \(G\) can serve as a semi-transitivity-violating edge in \(P\). As all edges are oriented from lower to higher index, vertices in any directed path appear in increasing index order. A Type $1$ missing edge cannot serve as a semi-transitivity-violating edge in \(P\), since such an edge can only occur between consecutive vertices of \(P\), contradicting the assumption that it is a non-consecutive pair of vertices in the path. The only other missing edge is \(\{a_1, b_n\}\). Any directed path containing both vertices must start at \(a_1\) and end at \(b_n\). Such a path cannot violate semi-transitivity, contradicting the assumption. Hence, no such path \(P\) exists. Therefore, the orientation is semi-transitive, and \(G\) is word-representable.
\end{proof}

\begin{proposition}
\label{gn6-wr-prop}
Every graph in the family \(G_{n}^{6}\), as shown in Figure~\ref{fig:min-non-comp-all}, is word-representable.
\end{proposition}
\begin{proof}
Consider an arbitrary graph \(G \in G_n^6\), with \(n \ge 3\), and vertex set \(V(G) = A \cup B \cup \{c_0\}\), where \(A = \{a_1, \dots, a_n\}\) and \(B = \{b_1, \dots, b_n\}\). The sets \(A\) and \(B\) each induce a clique in \(G\), and the \(2n+1\) missing edges form a cycle involving all vertices.

To prove word-representability, we construct a semi-transitive orientation (Theorem~\ref{wrg=semi}). Orient each edge from the vertex with smaller index to the vertex with larger index. This orientation is acyclic. Suppose for a contradiction that there exists a directed path \(P = u_1 \to \cdots \to u_k\) violating semi-transitivity. Then \(u_1 \to u_k\) is an edge, and \(\{u_i, u_j\} \notin E(G)\) for some \(1 \le i < j \le k\). The missing edges of \(G\) are of three types:
\begin{itemize}
    \item Type $1$: \(\{a_i, b_j\}\) with \(|i-j| \le 1\),
    \item Type $2$: \(\{c_0, b_n\}\),
    \item Type $3$: \(\{c_0, a_1\}\).
\end{itemize}
We show that none of the missing edges in \(G\) can serve as a semi-transitivity-violating edge in \(P\). As all edges are oriented from lower to higher index, vertices along any directed path appear in increasing index order. Thus, neither Type $1$ nor Type $3$ missing edges can serve as semi-transitivity-violating edges in \(P\). Any such missing edge can only occur between consecutive vertices of \(P\), contradicting the assumption that it is a non-consecutive pair of vertices in the path. The only remaining missing edge is the Type $2$ edge \(\{c_0, b_n\}\). Any directed path containing both vertices must start at \(c_0\) and end at \(b_n\). Such a path cannot violate semi-transitivity, contradicting the assumption. Hence, no such path \(P\) exists. Therefore, the orientation is semi-transitive, and \(G\) is word-representable.
\end{proof}

\begin{proposition}
\label{theorem-g7n-g8n}
Each graph in the families \(G_{n}^{7}\) and \(G_{n}^{8}\), as shown in Figure~\ref{fig:min-non-comp-all}, is word-representable.
\end{proposition}

\begin{proof}
Consider an arbitrary graph \(G \in \mathcal{G}_n^7\), where \(n \ge 1\), with vertex partition$
V(G) = V_1 \cup V_2 \cup V_3,
$
where \(V_1 = \{a,b,c,d\}\), \(V_2 = \{1,2,\dots,n\}\), and \(V_3 = \{x\}\). To prove that \(G\) is word-representable, we show that it admits a semi-transitive orientation (Theorem~\ref{wrg=semi}). We orient the edges of \(G\) as follows:
\begin{itemize}
    \item Edges with both endpoints in \(V_1\): \(b \to d\), \(d \to a\), \(a \to c\);
    \item Edges with both endpoints in \(V_2\): \(i \to j\) whenever \(i < j\);
\item Edges between distinct parts: all edges are oriented from \(V_i\) to \(V_j\) whenever \(i < j\).
\end{itemize}

We first show that the above orientation is acyclic. Assign to each vertex \(v \in V\) an index \(i \in \{1,2,3\}\) such that \(v \in V_i\). By construction, every edge between distinct parts is oriented from \(V_i\) to \(V_j\) whenever \(i < j\). Consequently, along any directed path, the associated indices form a non-decreasing sequence, and the index strictly increases whenever the path traverses an edge between different parts. It follows that no directed path can return to a vertex in a lower-indexed part. In particular, a directed cycle cannot contain vertices from more than one part. Therefore, any directed cycle, if it exists, must be entirely contained within a single part \(V_i\), that is, all its edges have both endpoints in \(V_i\). It remains to verify that no directed cycle is contained within any part \(V_i\). In \(G[V_1]\), the induced subgraph contains only the edges \(b \to d\), \(d \to a\), and \(a \to c\), which clearly form no directed cycle.  The subgraph \(G[V_2]\) is oriented according to the natural order: \(i \to j\) whenever \(i < j\). Hence, no directed cycle can exist in \(G[V_2]\). Finally, \(G[V_3]\) consists of the single vertex \(x\), and therefore contains no edges and no cycles. Therefore, the orientation defined for \(G\) is acyclic.

We now show that the defined orientation is semi-transitive by analyzing the structure of directed paths in \(G\). We distinguish the following cases:

\begin{itemize}
    \item Directed paths entirely contained within a single induced subgraph \(G[V_i]\), for \(i \in \{1,2,3\}\).
    
    \item Directed paths that traverse more than one part. Since all edges between distinct parts are oriented from \(V_i\) to \(V_j\) whenever \(i < j\), any such directed path can only move from \(V_1\) to \(V_2\), and from \(V_2\) to \(V_3\), and never in the reverse direction. Consequently, every such path is of one of the following forms:
    \begin{itemize}
        \item a path from \(V_1\) to \(V_2\);
        \item a path from \(V_2\) to \(V_3\);
        \item a path from \(V_1\) through \(V_2\) to \(V_3\).
    \end{itemize}
\end{itemize}

Any directed path with both endpoints in $V_{2}$ preserves semi-transitivity. The only missing edges in $G[V_{2}]$ are consecutive pairs \(\{i,i+1\}\), hence no semi-transitivity-violating configuration can occur. Any directed path with both endpoints in $V_{1}$ preserves semi-transitivity.  The longest directed path is \(b \to d \to a \to c\), and the edge \(b \to c\) is absent. Any other path has length at most $2$. Hence no semi-transitivity violation arises.

We now consider directed paths that start in \(V_1\) and end in \(V_2\). In the induced subgraph \(G[V_1]\), only the vertices \(a\) and \(d\) are adjacent to vertices in \(V_2\). More precisely, \(a\) is adjacent to all vertices of \(V_2\) except \(1\), while \(d\) is adjacent to all vertices of \(V_2\) except \(n\).

First, consider directed paths that start at \(a\). Since all edges between \(V_1\) and \(V_2\) are oriented from \(V_1\) to \(V_2\), any such path proceeds from \(a\) directly into \(V_2\) and then continues entirely within \(G[V_2]\). By construction, \(a\) is adjacent to every possible successor in \(V_2\) along such a path, except for the vertex \(1\), which cannot appear after \(a\) in a directed path due to the orientation \(i \to j\) whenever \(i < j\) inside $G[V_{2}]$. Moreover, any directed path contained in \(G[V_2]\) preserves semi-transitive. Hence, no shortcut violating semi-transitivity can arise, and every such path preserves semi-transitivity.

Next, consider directed paths that start at \(d\). Such paths either proceed directly from \(d\) to a vertex in \(V_2\), or pass through \(a\) before entering \(V_2\). In the latter case, the subpath starting at \(a\) satisfies semi-transitivity by the argument above. Furthermore, \(d\) is adjacent to all vertices of \(V_2\) except \(n\), and the vertex \(n\) can only appear as the terminal vertex of a directed path in \(G[V_2]\). Consequently, no such paths can violate semi-transitivity. A similar argument applies when the path proceeds directly from \(d\) into \(V_2\).
Therefore, all directed paths starting in \(V_1\) and ending in \(V_2\) preserve semi-transitivity.

We now consider directed paths that terminate at the vertex \(x \in V_3\). The vertex \(x\) is adjacent to all vertices of \(G\) except \(b\) and \(c\). Observe that the vertices \(b\) and \(c\) have no neighbors in \(V_2 \cup V_3\), and hence they do not appear in any directed path that traverses multiple parts and reaches \(x\). As established earlier, any directed path that traverses multiple parts can only move from \(V_i\) to \(V_j\) with \(i \le j\), and never in the reverse direction. Extending such a path by the final vertex \(x\) yields a directed path terminating at \(x\). Since \(x\) is adjacent to all vertices that can appear in such a path, no shortcut violating semi-transitivity can arise from this extension. Moreover, we have already shown that all such paths prior to the addition of \(x\) preserve semi-transitivity.

Therefore, every directed path terminating at \(x\) preserves semi-transitivity. Consequently, the defined orientation of \(G\) is semi-transitive.

The same argument applies to graphs in the family \(G_n^8\). The only difference in the orientation lies in the induced subgraph \(G[V_1]\), where an additional edge between \(b\) and \(c\) is present, oriented as \(c \to b\). In this case, the orientation within \(G[V_1]\) is given by \(d \to a\), \(c \to a\), \(c \to b\), and \(d \to b\). All other edges are oriented according to the same rules as for the family \(G_n^7\). In particular, the orientations in \(G[V_2]\), \(G[V_3]\), and all edges between different parts remain unchanged.
Since these modifications do not affect the arguments for acyclicity or semi-transitivity, the same reasoning as for \(G_n^7\) applies. Hence, every graph in \(G_n^8\) is also word-representable.
\end{proof}

\begin{remark}
The graphs \(H_{2}, H_{3}, \dots, H_{11}\), together with \(G_{2}^{9}\), are word-representable. For each of these graphs, a semi-transitive orientation is given in Figure~\ref{fig:min-non-comp-part4-semi-trans}.
\end{remark}

\begin{figure}[htbp]
    \centering

\subfloat[$H_{2}$]{

\begin{tikzpicture}[
    scale=0.35,
    transform shape,
    every node/.style={font=\Huge}
]

\tikzset{
    vertex/.style={
        draw,
        circle,
        fill=black,
        minimum size=0.4cm,
        inner sep=0pt
    },
    edge/.style={
        thick,
        postaction={decorate},
        decoration={
            markings,
            mark=at position 0.5 with {\arrow{>}}
        }
    }
}

\node[vertex, label=above left:$1$] (7) at (0, 0.2) {};
\node[vertex, label=above:$2$] (4) at (0, -1.9) {};
\node[vertex, label=right:$3$] (2) at (3, -3.5) {};
\node[vertex, label=left:$4$] (3) at (-4, 1) {};
\node[vertex, label=right:$5$] (1) at (4, 1) {};
\node[vertex, label=above:$6$] (5) at (0, 3.5) {};
\node[vertex, label=left:$7$] (6) at (-3, -3.5) {};

\draw[edge] (7) -- (3);
\draw[edge] (7) -- (1);
\draw[edge] (5) -- (7);
\draw[edge] (7) -- (6);
\draw[edge] (7) -- (2);
\draw[edge] (5) -- (2);
\draw[edge] (2) -- (6);
\draw[edge] (5) -- (3);
\draw[edge] (3) -- (6);
\draw[edge] (4) -- (2);
\draw[edge] (4) -- (6);

\end{tikzpicture}
}
\hspace{0.22cm}
\subfloat[$H_{3}$]{
\begin{tikzpicture}[
    scale=0.35,
    transform shape,
    every node/.style={font=\Huge}
]

\tikzset{
    vertex/.style={
        draw,
        circle,
        fill=black,
        minimum size=0.4cm,
        inner sep=0pt
    },
    edge/.style={
        thick,
        postaction={decorate},
        decoration={
            markings,
            mark=at position 0.5 with {\arrow{>}}
        }
    }
}

\node[vertex, label=above left:$1$] (1) at (0, 0) {};
\node[vertex, label=left:$2$] (2) at (-6, -0.5) {};
\node[vertex, label=right:$3$] (6) at (2.2, -3) {};
\node[vertex, label=left:$4$] (7) at (-3.5, 1) {};
\node[vertex, label=right:$5$] (5) at (3.5, 1) {};
\node[vertex, label=above:$6$] (3) at (0, 3.5) {};
\node[vertex, label=left:$7$] (4) at (-2.2, -3) {};

\draw[edge] (1) -- (3);
\draw[edge] (1) -- (4);
\draw[edge] (5) -- (1);
\draw[edge] (1) -- (6);
\draw[edge] (1) -- (7);

\draw[edge] (5) -- (3);
\draw[edge] (3) -- (7);

\draw[edge] (4) -- (2);

\draw[edge] (5) -- (6);

\draw[edge] (6) -- (2);
\draw[edge] (6) -- (4);

\draw[edge] (7) -- (2);

\draw[edge] (4) -- (7);
\draw[edge] (6) -- (7);

\end{tikzpicture}
}
     \hspace{0.22cm}
    \subfloat[$H_{4}$]{
       \begin{tikzpicture}[
    scale=0.35,
    transform shape,
    every node/.style={font=\Huge}
]

\tikzset{
    vertex/.style={
        draw,
        circle,
        fill=black,
        minimum size=0.4cm,
        inner sep=0pt
    },
    edge/.style={
        thick,
        postaction={decorate},
        decoration={
            markings,
            mark=at position 0.5 with {\arrow{>}}
        }
    }
}
        
\node[draw, circle, fill=black, text=white, minimum size=0.4cm,
      label=above:$1$] (1) at (0, 3.5) {};

\node[draw, circle, fill=black, text=white, minimum size=0.4cm,
      label=right:$2$] (2) at (2.2, -3) {};

\node[draw, circle, fill=black, text=white, minimum size=0.4cm,
      label=left:$3$] (3) at (-2.2, -3) {};

\node[draw, circle, fill=black, text=white, minimum size=0.4cm,
      label=left:$4$] (4) at (-6, 1) {};

\node[draw, circle, fill=black, text=white, minimum size=0.4cm,
      label=right:$5$] (5) at (3.5, 1) {};

\node[draw, circle, fill=black, text=white, minimum size=0.4cm,
      label=above left:$6$] (6) at (0, 0) {};

\node[draw, circle, fill=black, text=white, minimum size=0.4cm,
      label=below left:$7$] (7) at (-3.5, 1) {};

\draw[edge] (6) -- (1);
\draw[edge] (6) -- (2);
\draw[edge] (6) -- (3);
\draw[edge] (6) -- (5);
\draw[edge] (6) -- (7);

\draw[edge] (2) -- (7);

\draw[edge] (3) -- (4);

\draw[edge] (3) -- (7);

\draw[edge] (1) -- (4);

\draw[edge] (7) -- (4);

\draw[edge] (5) -- (1);
\draw[edge] (5) -- (2);

\draw[edge] (1) -- (7);

        \end{tikzpicture}
       
   }

\vspace{0.4cm}

    \subfloat[$H_{5}$]{
     \begin{tikzpicture}[
    scale=0.35,
    transform shape,
    every node/.style={font=\Huge}
]

\tikzset{
    vertex/.style={
        draw,
        circle,
        fill=black,
        minimum size=0.4cm,
        inner sep=0pt
    },
    edge/.style={
        thick,
        postaction={decorate},
        decoration={
            markings,
            mark=at position 0.5 with {\arrow{>}}
        }
    }
}

\node[vertex, label=above right:$6$] (6) at (0, 0) {};
\node[vertex, label=above:$4$] (4) at (0, -2) {};
\node[vertex, label=right:$2$] (2) at (2.2, -3) {};
\node[vertex, label=left:$7$] (7) at (-3.5, 1) {};
\node[vertex, label=right:$5$] (5) at (3.5, 1) {};
\node[vertex, label=above:$1$] (1) at (0, 3.5) {};
\node[vertex, label=left:$3$] (3) at (-2.2, -3) {};

\draw[edge] (6) -- (1);
\draw[edge] (6) -- (2);
\draw[edge] (6) -- (3);
\draw[edge] (6) -- (7);

\draw[edge] (5) -- (7);

\draw[edge] (1) -- (3);
\draw[edge] (2) -- (3);

\draw[edge] (5) -- (2);

\draw[edge] (2) -- (4);
\draw[edge] (3) -- (4);

\draw[edge] (5) -- (1);

\draw[edge] (5) -- (6);

\draw[edge] (1) -- (7);

\end{tikzpicture}
        
    }
    \hspace{0.22cm}
    \subfloat[$H_{6}$]{
   \begin{tikzpicture}[
    scale=0.35,
    transform shape,
    every node/.style={font=\Huge}
]

\tikzset{
    vertex/.style={
        draw,
        circle,
        fill=black,
        minimum size=0.4cm,
        inner sep=0pt
    },
    edge/.style={
        thick,
        postaction={decorate},
        decoration={
            markings,
            mark=at position 0.5 with {\arrow{>}}
        }
    }
}

\node[vertex, label=above:$1$] (1) at (0, 3.5) {};
\node[vertex, label=above right:$6$] (6) at (0, 0) {};
\node[vertex, label=above:$4$] (4) at (0, -2) {};
\node[vertex, label=right:$2$] (2) at (2.2, -3) {};
\node[vertex, label=left:$7$] (7) at (-3.5, 1) {};
\node[vertex, label=right:$5$] (5) at (3.5, 1) {};
\node[vertex, label=left:$3$] (3) at (-2.2, -3) {};


\draw[edge] (6) -- (1);
\draw[edge] (6) -- (2);
\draw[edge] (6) -- (3);
\draw[edge] (6) -- (7);

\draw[edge] (1) -- (3);

\draw[edge] (2) -- (3);

\draw[edge] (3) -- (4);

\draw[edge] (2) -- (4);

\draw[edge] (5) -- (1);
\draw[edge] (5) -- (2);

\draw[edge] (5) -- (6);

\draw[edge] (1) -- (7);

\end{tikzpicture}
      
    }
\hspace{0.24cm}
\subfloat[$H_{7}$]{
\begin{tikzpicture}[
    scale=0.35,
    transform shape,
    every node/.style={font=\Huge}
]

\tikzset{
    vertex/.style={
        draw,
        circle,
        fill=black,
        minimum size=0.4cm,
        inner sep=0pt
    },
    edge/.style={
        thick,
        postaction={decorate},
        decoration={
            markings,
            mark=at position 0.5 with {\arrow{>}}
        }
    }
}

\node[vertex, label=above:$1$] (1) at (0, 2.5) {};
\node[vertex, label=above right:$2$] (2) at (0, -0.5) {};
\node[vertex, label=right:$3$] (3) at (3, 0) {};
\node[vertex, label=left:$4$] (4) at (-3, -2.5) {};
\node[vertex, label=right:$5$] (5) at (0, -2.5) {};
\node[vertex, label=right:$6$] (6) at (3, -2.5) {};
\node[vertex, label=below:$7$] (7) at (0, -4.3) {};


\draw[edge] (1) -- (2);

\draw[edge] (1) -- (3);

\draw[edge] (1) -- (6);

\draw[edge] (1) -- (4);

\draw[edge] (2) -- (4);

\draw[edge] (2) -- (6);

\draw[edge] (2) -- (5);

\draw[edge] (3) -- (6);

\draw[edge] (4) -- (5);

\draw[edge] (4) -- (7);

\draw[edge] (5) -- (7);

\draw[edge] (6) -- (7);

\end{tikzpicture}
       
    }
\vspace{0.3cm}

   \subfloat[$H_{8}$]{
      \begin{tikzpicture}[
    scale=0.35,
    transform shape,
    every node/.style={font=\Huge}
]

\tikzset{
    vertex/.style={
        draw,
        circle,
        fill=black,
        minimum size=0.4cm,
        inner sep=0pt
    },
    edge/.style={
        thick,
        postaction={decorate},
        decoration={
            markings,
            mark=at position 0.5 with {\arrow{>}}
        }
    }
}

\node[vertex, label=above:$1$] (1) at (0, 2.5) {};
\node[vertex, label=above right:$2$] (2) at (0, -0.5) {};
\node[vertex, label=right:$3$] (3) at (3, 0) {};
\node[vertex, label=left:$4$] (4) at (-3, -2.5) {};
\node[vertex, label=right:$5$] (5) at (0, -2.5) {};
\node[vertex, label=right:$6$] (6) at (3, -2.5) {};
\node[vertex, label=below:$7$] (7) at (0, -4.3) {};


\draw[edge] (1) -- (2);

\draw[edge] (1) -- (3);

\draw[edge] (1) -- (6);

\draw[edge] (1) -- (4);

\draw[edge] (2) -- (4);

\draw[edge] (2) -- (6);

\draw[edge] (2) -- (5);

\draw[edge] (3) -- (6);

\draw[edge] (4) -- (5);

\draw[edge] (4) -- (7);

\draw[edge] (6) -- (7);

\end{tikzpicture}
     
}
\hspace{0.3cm}
\subfloat[$H_{9}$]{
    \begin{tikzpicture}[
    scale=0.35,
    transform shape,
    every node/.style={font=\Huge}
]

\tikzset{
    vertex/.style={
        draw,
        circle,
        fill=black,
        minimum size=0.4cm,
        inner sep=0pt
    },
    edge/.style={
        thick,
        postaction={decorate},
        decoration={
            markings,
            mark=at position 0.5 with {\arrow{>}}
        }
    }
}

\node[vertex, label=above:$1$] (1) at (0, 2.5) {};
\node[vertex, label=above right:$2$] (2) at (0, -0.5) {};
\node[vertex, label=right:$3$] (3) at (3, 0) {};
\node[vertex, label=left:$4$] (4) at (-3, -2.5) {};
\node[vertex, label=right:$5$] (5) at (0, -2.5) {};
\node[vertex, label=right:$6$] (6) at (3, -2.5) {};
\node[vertex, label=below:$7$] (7) at (0, -4.3) {};


\draw[edge] (1) -- (2);

\draw[edge] (1) -- (3);

\draw[edge] (1) -- (6);

\draw[edge] (1) -- (4);

\draw[edge] (2) -- (4);

\draw[edge] (2) -- (6);

\draw[edge] (2) -- (5);

\draw[edge] (3) -- (6);

\draw[edge] (4) -- (5);

\draw[edge] (4) -- (7);

\end{tikzpicture}
      
   }
\hspace{0.3cm}
    \subfloat[$H_{10}$]{
      \begin{tikzpicture}[
    scale=0.35,
    transform shape,
    every node/.style={font=\Huge}
]

\tikzset{
    vertex/.style={
        draw,
        circle,
        fill=black,
        minimum size=0.4cm,
        inner sep=0pt
    },
    edge/.style={
        thick,
        postaction={decorate},
        decoration={
            markings,
            mark=at position 0.5 with {\arrow{>}}
        }
    }
}

\node[vertex, label=above:$1$] (1) at (0, 2.5) {};
\node[vertex, label=above right:$2$] (2) at (0, -0.5) {};
\node[vertex, label=right:$3$] (3) at (3, 0) {};
\node[vertex, label=left:$4$] (4) at (-3, -2.5) {};
\node[vertex, label=right:$5$] (5) at (0, -2.5) {};
\node[vertex, label=right:$6$] (6) at (3, -2.5) {};
\node[vertex, label=below:$7$] (7) at (0, -4.3) {};


\draw[edge] (1) -- (2);

\draw[edge] (1) -- (3);

\draw[edge] (1) -- (6);

\draw[edge] (1) -- (4);

\draw[edge] (2) -- (4);

\draw[edge] (2) -- (6);

\draw[edge] (2) -- (5);

\draw[edge] (3) -- (6);

\draw[edge] (4) -- (5);

\draw[edge] (4) -- (7);

\draw[edge] (5) -- (7);

\end{tikzpicture}
        
    } 
 \hspace{0.3cm}
  \subfloat[$H_{11}$]{
\begin{tikzpicture}[
    scale=0.35,
    transform shape,
    every node/.style={font=\Huge}
]

\tikzset{
    vertex/.style={
        draw,
        circle,
        fill=black,
        minimum size=0.4cm,
        inner sep=0pt
    },
    edge/.style={
        thick,
        postaction={decorate},
        decoration={
            markings,
            mark=at position 0.5 with {\arrow{>}}
        }
    }
}

\node[vertex, label=right:$1$] (1) at (1.15, -2.5) {};
\node[vertex, label=right:$2$] (2) at (2.3, 0.6) {};
\node[vertex, label=left:$4$] (4) at (-3.7, 0.6) {};
\node[vertex, label=above:$3$] (3) at (-0.7, 3) {};
\node[vertex, label=left:$5$] (5) at (-2.95, -2.5) {};
\node[vertex, label=left:$6$] (6) at (-5.2, -4.7) {};
\node[vertex, label=right:$7$] (7) at (3.4, -4.7) {};


\draw[edge] (1) -- (2);

\draw[edge] (3) -- (2);

\draw[edge] (3) -- (4);

\draw[edge] (1) -- (5);

\draw[edge] (3) -- (1);

\draw[edge] (4) -- (5);

\draw[edge] (2) -- (5);

\draw[edge] (3) -- (5);

\draw[edge] (1) -- (4);

\draw[edge] (1) -- (6);

\draw[edge] (4) -- (6);

\draw[edge] (2) -- (7);

\draw[edge] (5) -- (7);

\end{tikzpicture}

  }
   \hspace{0.3cm}
  \subfloat[$G_{2}^{9}$]{
\begin{tikzpicture}[
    scale=0.35,
    transform shape,
    every node/.style={font=\Huge}
]

\tikzset{
    vertex/.style={
        draw,
        circle,
        fill=black,
        minimum size=0.4cm,
        inner sep=0pt
    },
    edge/.style={
        thick,
        postaction={decorate},
        decoration={
            markings,
            mark=at position 0.5 with {\arrow{>}}
        }
    }
}

\node[vertex, label=left:$1$] (1) at (-0.5, 1) {};
\node[vertex, label=above:$2$] (2) at (2.2, -0.25) {};
\node[vertex, label=right:$4$] (4) at (6, 1) {};
\node[vertex, label=left:$3$] (3) at (-1.1, -1.4) {};
\node[vertex, label=above:$5$] (5) at (3, 3.5) {};
\node[vertex, label=right:$6$] (6) at (3, -2.7) {};


\draw[edge] (1) -- (3);

\draw[edge] (1) -- (4);

\draw[edge] (1) -- (5);

\draw[edge] (1) -- (6);

\draw[edge] (2) -- (4);

\draw[edge] (2) -- (6);

\draw[edge] (3) -- (6);

\draw[edge] (4) -- (6);

\draw[edge] (4) -- (5);

\end{tikzpicture}

  }
   \caption{Semi-transitive orientations of individual graphs in Figure~\ref{fig:min-non-comp-all} ($H_{1}$ is not semi-transitive, and $G_{2}^{9}$ is a special instance of the family $G_{n}^{9}$, which is semi-transitive).}
    \label{fig:min-non-comp-part4-semi-trans}
\end{figure}
Consider the orientations of the graphs in Figure~\ref{fig:min-non-comp-part4-semi-trans}.  In $H_{2}$, the longest directed path is $6 \rightarrow 1 \rightarrow 4 \rightarrow 7$, and the vertices $6$ and $7$ are not adjacent. All other directed paths in $H_{2}$ have length at most $2$.  In $H_{3}$, the only directed paths of length greater than $2$ whose endpoints are adjacent are $3 \rightarrow 7 \rightarrow 4 \rightarrow 2$ and $1 \rightarrow 3 \rightarrow 7 \rightarrow 4$. In each case, the vertices involved induce a clique.  In $H_{4}$, there are no directed paths of length greater than $2$ whose endpoints are adjacent.  In $H_{5}$, the only directed path of length greater than $2$ with adjacent endpoints is $5 \rightarrow 6 \rightarrow 1 \rightarrow 7$, and the corresponding vertices induce a clique.  For the graphs $H_{6}, H_{7}, H_{8}, H_{9},$ and $H_{10}$, there are no directed paths of length greater than $2$ whose endpoints are adjacent.  In $H_{11}$, the only directed paths of length greater than $2$ with adjacent endpoints are $3 \rightarrow 1 \rightarrow 2 \rightarrow 5$ and $3 \rightarrow 1 \rightarrow 4 \rightarrow 5$. In each case, the vertices involved induce a clique.  Finally, for the graph $G_{2}^{9}$, a special case of the family $G_{n}^{9}$ with $n=2$, there are no directed paths of length greater than $2$.  

Hence, in all cases, semi-transitivity can be verified by direct inspection.

\subsection{Minimal Non-Comparability Graphs that are Non-Word-Representable}
\label{section-minimal-non-semi-trans}

In this subsection, we establish the non-word-representability of all graphs listed in Theorem~\ref{non-wr-graphs-thm} via a sequence of intermediate propositions.

\begin{proposition}
    \label{theorem-g9n}
    Every graph in the family \( G_{n}^{9} \), for \( n \geq 3 \), as shown in Figure~\ref{fig:min-non-comp-all}, is minimal non-word-representable.
\end{proposition}

\begin{proof}
Let \( G \in G_{n}^{9} \) with \( n \geq 3 \). Suppose, for the sake of contradiction, that \( G \) is word-representable. Then, by Theorem~\ref{wrg=semi}, it admits a semi-transitive orientation. By Theorem~\ref{semi-v-source}, we fix such an orientation in which the vertex \( d \) is a source. By symmetry between \( a \) and \( b \), we may further assume that the edge \( \{a,b\} \) is oriented as \( a \rightarrow b \).

Now, for the edges \( \{a,c\} \) and \( \{c,b\} \), the orientations \( a \rightarrow c \) and \( c \rightarrow b \) are not possible, since the path \( d \rightarrow a \rightarrow c \rightarrow b \) would create a shortcut, violating semi-transitivity. Similarly, the orientations \( b \rightarrow c \) and \( c \rightarrow a \) are not possible, as the path \( d \rightarrow b \rightarrow c \rightarrow a \) creates a shortcut. Thus, the only possibilities are: \( a \rightarrow c, \ b \rightarrow c \) or \( c \rightarrow a, \ c \rightarrow b \). We analyze both cases.

In either case, the edge \( \{1,b\} \) must be oriented as \( 1 \rightarrow b \), since \( b \rightarrow 1 \) would create a path \( d \rightarrow a \rightarrow b \rightarrow 1 \), which forms a shortcut.

\paragraph{Case 1. \( a \rightarrow c, \ b \rightarrow c \)}
\begin{observation}
\label{itoaitobcase1}
In any semi-transitive orientation of \( G \) in which \( d \) is a source and the edges are oriented as \( a \rightarrow c \), \( b \rightarrow c \), and \( a \rightarrow b \), it follows that for every vertex \( i \), \( 2 \leq i \leq n-1 \), the edges incident to \( a \) and \( b \) must be oriented as
\( i \rightarrow a \) and \( i \rightarrow b \).
\end{observation}

\begin{proof}
Suppose first that \( a \rightarrow i \) and \( i \rightarrow b \). Then the directed path \( a \rightarrow i \rightarrow b \rightarrow c \) forms a shortcut, which is not allowed in a semi-transitive orientation. Similarly, if \( b \rightarrow i \) and \( i \rightarrow a \), then the path \( b \rightarrow i \rightarrow a \rightarrow c \) also forms a shortcut. Thus, the only remaining possibilities are: (i) \( a \rightarrow i \) and \( b \rightarrow i \), and (ii) \( i \rightarrow a \) and \( i \rightarrow b \).

Consider the case \( i = 2 \). If (i) holds, then in particular \( b \rightarrow 2 \), and hence the path \( d \rightarrow 1 \rightarrow b \rightarrow 2 \) forms a shortcut. Therefore, we must have \( 2 \rightarrow a \) and \( 2 \rightarrow b \). Now assume, for the sake of contradiction, that there exists some \( i \) such that \( a \rightarrow i \) and \( b \rightarrow i \). Let \( i \) be the smallest such index. Then the path \( d \rightarrow (i-1) \rightarrow b \rightarrow i \) forms a shortcut, again contradicting semi-transitivity. Hence, for all \( i \), \( 2 \leq i \leq n-1 \), we must have \( i \rightarrow a \) and \( i \rightarrow b \).
\end{proof}
Finally, consider the edge $\{a, n\}$. If \( a \rightarrow n \), then \( d \rightarrow (n-1) \rightarrow a \rightarrow n \) violates semi-transitivity. If \( n \rightarrow a \), then \( d \rightarrow n \rightarrow a \rightarrow b \) violates semi-transitivity. In both cases we reach a contradiction.  Thus, no semi-transitive orientation exists in Case~1.

\paragraph{Case 2. \( c \rightarrow a, \ c \rightarrow b \)}

\begin{observation}
\label{itoaitobcase2}
In any semi-transitive orientation of \( G \) in which \( d \) is a source and the edges are oriented as \( c \rightarrow a \), \( c \rightarrow b \), and \( a \rightarrow b \), it follows that for every vertex \( i \), \( 2 \leq i \leq n-1 \), the edges incident to \( a \) and \( b \) must be oriented as $i \rightarrow a \quad \text{and} \quad i \rightarrow b.$
\end{observation}

\begin{proof}
Suppose first that \( a \rightarrow i \) and \( i \rightarrow b \). Then the directed path \( c \rightarrow a \rightarrow i \rightarrow b \) forms a shortcut, which is not allowed in a semi-transitive orientation. Similarly, if \( b \rightarrow i \) and \( i \rightarrow a \), then the path \( c \rightarrow b \rightarrow i \rightarrow a \) also forms a shortcut. Thus, the only remaining possibilities are: (i) \( a \rightarrow i \) and \( b \rightarrow i \), and (ii) \( i \rightarrow a \) and \( i \rightarrow b \).

Consider the case \( i = 2 \). If (i) holds, then in particular \( b \rightarrow 2 \), and hence the path \( d \rightarrow 1 \rightarrow b \rightarrow 2 \) forms a shortcut. Therefore, we must have \( 2 \rightarrow a \) and \( 2 \rightarrow b \). Now assume, for the sake of contradiction, that there exists some \( i \) such that \( a \rightarrow i \) and \( b \rightarrow i \). Let \( i \) be the smallest such index. Then the path \( d \rightarrow (i-1) \rightarrow b \rightarrow i \) forms a shortcut, again contradicting semi-transitivity. Hence, for all \( i \), \( 2 \leq i \leq n-1 \), we must have \( i \rightarrow a \) and \( i \rightarrow b \).
\end{proof}

Now when we consider the edge $\{a,n\}$, both orientations \( a \rightarrow n \) and \( n \rightarrow a \) creates shortcuts, via the paths \( d \rightarrow (n-1) \rightarrow a \rightarrow n \) and \( d \rightarrow n \rightarrow a \rightarrow b \), respectively. Thus, no semi-transitive orientation exists in Case~2 also. Since no semi-transitive orientation exists in either case, we arrive at a contradiction. Hence, \( G \) is non-word-representable. Moreover, as \( G \) is minimal non-comparability, every proper induced subgraph of \( G \) is comparability, and hence word-representable. Therefore, \( G \) is minimal non-word-representable.
\end{proof}

\begin{proposition}
    \label{theorem-g4n}
    Every graph in the family \( G_{n}^{4} \), as shown in Figure~\ref{fig:min-non-comp-all}, is minimal non-word-representable.
\end{proposition}

\begin{proof}
Consider an arbitrary graph \( G \in G_{n}^{4} \), where \( n \geq 3 \). Suppose, for the sake of contradiction, that \( G \) is word-representable. Then, by Theorem~\ref{wrg=semi}, it admits a semi-transitive orientation. Let \( G' \) be the induced subgraph of \( G \) obtained by removing the vertices \( x \) and \( y \). Since \( G \) admits a semi-transitive orientation, the restriction of such an orientation to \( G' \) yields a semi-transitive orientation of \( G' \). Fix one such orientation of \( G' \) in which the vertex \( 1 \) is a source, which exists by Theorem~\ref{semi-v-source}.

We next analyze the possible semi-transitive orientations of $G'$ with vertex $1$ as a source. The edge $\{2,3\}$ can be oriented either $2 \rightarrow 3$ or $3 \rightarrow 2$; we consider these as two separate cases. Irrespective of the orientation of $\{2,3\}$, the following holds regarding the neighborhood of vertex $2n+1$.
\begin{observation}
\label{obs:2nplus1_orientation}
If $G'$ is oriented such that vertex $1$ is a source, then the orientation of all edges incident to vertex $2n+1$ is determined as follows:
\begin{enumerate}
   
    \item If $(2n+1) \rightarrow 2n$, then $(2n+1) \rightarrow i$ for all $i \in \{2, \dots, 2n-1\}$.
    \item If $2n \rightarrow (2n+1)$, then $i \rightarrow (2n+1)$ for all $i \in \{2, \dots, 2n-1\}$.
\end{enumerate}
\end{observation}

\begin{proof}
Suppose first that the edge $\{2n, 2n+1\}$ is oriented as $(2n+1) \to 2n$. We determine the orientation of each edge $\{2n+1, i\}$, where $i \in \{2, \dots, 2n-1\}$. Consider $i = 2$. If $2 \to 2n+1$, then the directed path $1 \to 2 \to 2n+1 \to 2n$ forms a shortcut, contradicting semi-transitivity. Hence $(2n+1) \to 2$. Now let $j$ satisfy $2 \le j \le 2n-2$ and assume $(2n+1) \to j$. If $j+1 \to 2n+1$, then the directed path $1 \to j+1 \to 2n+1 \to 2$ forms a shortcut. Therefore $(2n+1) \to j+1$. By induction on $j$, it follows that $(2n+1) \to i$ for all $2 \le i \le 2n-1$.

Now suppose instead that the edge $\{2n, 2n+1\}$ is oriented as $2n \to 2n+1$. Consider $i = 2$. If $(2n+1) \to 2$, then the directed path $1 \to 2n \to 2n+1 \to 2$ forms a shortcut. Hence $2 \to 2n+1$. Now let $j$ satisfy $2 \le j \le 2n-2$ and assume $j \to 2n+1$. If $2n+1 \to j+1$, then the directed path $1 \to 2 \to 2n+1 \to j+1$ creates a shortcut. Therefore $j+1 \to 2n+1$. By induction, $i \to 2n+1$ for all $2 \le i \le 2n-1$.
\end{proof}

Now, let us look at the case where we orient the edge $\{2,3\}$ as $2 \rightarrow 3$.
\paragraph{Case 1. $2 \rightarrow 3$}
\begin{observation}
\label{case1lemma_i_i+1}
If \( G' \) is oriented with vertex \( 1 \) as the source and the edge \( \{2,3\} \) is oriented as \( 2 \rightarrow 3 \), then for every even vertex \( i \), \( 4 \leq i \leq 2n - 2 \), the orientations \( i \rightarrow (i-1) \) and \( i \rightarrow (i+1) \) are forced by semi-transitivity.
\end{observation}

\begin{proof}
We proceed by induction on even indices \( i \), where \( 4 \le i \le 2n-2 \).

\textbf{Base case (\( i = 4 \)):}
If \( 3 \rightarrow 4 \), then the path \( 1 \rightarrow 2 \rightarrow 3 \rightarrow 4 \) forms a shortcut, contradicting semi-transitivity. Hence, \( 4 \rightarrow 3 \) must hold. Similarly, if \( 5 \rightarrow 4 \), then, since \( 1 \) is a source, we have \( 1 \rightarrow 5 \), and hence the path \( 1 \rightarrow 5 \rightarrow 4 \rightarrow 3 \) forms a shortcut, contradicting semi-transitivity. Hence, \( 4 \rightarrow 5 \). Therefore, the claim holds for \( i = 4 \).

\textbf{Inductive step:}
Assume that for some even integer \( j \), where \( 4 \le j \le 2n-4 \), we have \( j \rightarrow (j-1) \) and \( j \rightarrow (j+1) \). We show that the statement holds for \( i = j+2 \), which is also even and satisfies \( 6 \le i \le 2n-2 \). If \( (j+1) \rightarrow (j+2) \), then the path \( 1 \rightarrow j \rightarrow (j+1) \rightarrow (j+2) \) forms a shortcut, contradicting semi-transitivity. Hence, \( (j+2) \rightarrow (j+1) \). Similarly, if \( (j+3) \rightarrow (j+2) \), then the path \( 1 \rightarrow (j+1) \rightarrow (j+3) \rightarrow (j+2) \) forms a shortcut, contradicting semi-transitivity. Hence, \( (j+2) \rightarrow (j+3) \). Thus, the claim holds for \( i = j+2 \). Hence, by induction, the observation holds for all even \( i \) with \( 4 \le i \le 2n-2 \).
\end{proof}

Note that if \( (2n-1) \rightarrow 2n \), then the path \( 1 \rightarrow (2n-2) \rightarrow (2n-1) \rightarrow 2n \) forms a shortcut, contradicting semi-transitivity. Thus, the valid orientation is \( 2n \rightarrow (2n-1) \).

At this point, all possible semi-transitive orientations under Case~1 have been determined; there are exactly two such orientations, illustrated in Figures~\ref{G''UA} and \ref{G''UB}. This corresponds precisely to fixing the orientation of the edge \( \{2,3\} \) as \( 2 \rightarrow 3 \) and applying the two cases described in Observation~\ref{obs:2nplus1_orientation}.

Now, let us look at the case where we orient the edge $\{2,3\}$ as $3 \rightarrow 2$.

\paragraph{Case 2. $3 \rightarrow 2$}
\begin{observation}
\label{case2lemma_i_i+1}
If \( G' \) is oriented with vertex \( 1 \) as the source and the edge \( \{2,3\} \) is oriented as \( 3 \rightarrow 2 \), then for every even vertex \( i \), \( 4 \leq i \leq 2n - 2 \), the orientations \( (i-1) \rightarrow i \) and \( (i+1) \rightarrow i \) are forced by semi-transitivity.
\end{observation}

\begin{proof}
We proceed by induction on even indices \( i \), where \( 4 \le i \le 2n-2 \).

\textbf{Base case (\( i = 4 \)):}
If \( 4 \rightarrow 3 \), then the directed path
\( 1 \rightarrow 4 \rightarrow 3 \rightarrow 2 \)
violates semi-transitivity. Hence, \( 3 \rightarrow 4 \) must hold. Similarly, if \( 4 \rightarrow 5 \), then the directed path
\( 1 \rightarrow 3 \rightarrow 4 \rightarrow 5 \)
violates semi-transitivity, so the correct orientation is \( 5 \rightarrow 4 \). Therefore, the claim holds for \( i = 4 \).

\textbf{Inductive step:}
Assume that for some even integer \( j \), where \( 4 \le j \le 2n-4 \), the orientations \( (j-1) \rightarrow j \) and \( (j+1) \rightarrow j \) hold. We show that the statement holds for \( i = j+2 \), which is also even and satisfies \( 6 \le i \le 2n-2 \). If \( (j+2) \rightarrow (j+1) \), then the directed path
\( 1 \rightarrow (j+2) \rightarrow (j+1) \rightarrow j \)
violates semi-transitivity. Hence, \( (j+1) \rightarrow (j+2) \). Similarly, if \( (j+2) \rightarrow (j+3) \), then the directed path
\( 1 \rightarrow (j+1) \rightarrow (j+2) \rightarrow (j+3) \)
violates semi-transitivity. Hence, \( (j+3) \rightarrow (j+2) \). Thus, the claim holds for \( i = j+2 \). By induction, the observation holds for all even \( i \) with \( 4 \le i \le 2n-2 \).
\end{proof}

Note that  if \( 2n \rightarrow (2n-1) \), the path \( 1 \rightarrow 2n \rightarrow (2n-1) \rightarrow (2n-2) \) violates semi-transitivity. Thus, the valid orientation is \( (2n-1) \rightarrow 2n \).

At this point, all possible semi-transitive orientations under Case $2$ have been determined; there are exactly two such orientations, illustrated in Figures~\ref{G''UC} and \ref{G''UD}. This corresponds precisely to fixing the orientation of the edge \( \{2,3\} \) as \( 3 \rightarrow 2 \) and applying the two cases described in Observation~\ref{obs:2nplus1_orientation}.

\begin{figure}[htbp]
        \centering
\subfloat[Case $(a)$: $2 \rightarrow3$, $(2n+1) \rightarrow 2n$.]{

  \begin{tikzpicture}[scale=0.45, transform shape,
    every node/.style={font=\Huge},
    line width=1.2pt,
    >=Stealth
  ]

\node[draw, fill=black, circle, minimum size=0.4cm, text=white, label=right:{$1$}] (1) at (1.15, -2.5) {};
\node[draw, fill=black, circle, minimum size=0.4cm, text=white, label=right:{$2$}] (2) at (3.4, 0) {};
\node[draw, fill=black, circle, minimum size=0.4cm, text=white, label=left:{$2n$}] (2n) at (-3.9, 0) {};
\node[draw, fill=black, circle, minimum size=0.4cm, text=white, label=right:{$3$}] (3) at (3.37, 2.3) {};
\node[draw, fill=black, circle, minimum size=0.4cm, text=white, label=left:{$2n-1$}] (r1) at (-3.9, 2.3) {};
\node[draw, fill=black, circle, minimum size=0.4cm, text=white, label=right:{$4$}] (4) at (1.2, 4) {};
\node[draw, fill=black, circle, minimum size=0.4cm, text=white] (r) at (-1.7, 4) {};
\node[draw, fill=black, circle, minimum size=0.4cm, text=white, label=left:{$2n+1$}] (2n+1) at (-1.65, -2.5) {};

\draw[->, line width=1.2pt, shorten >=2pt, shorten <=2pt] (1) -- (2);
\draw[->, line width=1.2pt, shorten >=2pt, shorten <=2pt] (2) -- (3);
\draw[<-, line width=1.2pt, shorten >=2pt, shorten <=2pt] (3) -- (4);
\draw[->, line width=1.2pt, shorten >=2pt, shorten <=2pt] (4) -- (r) node[midway, above, yshift=5pt] {. . . . . . .};

\draw[->, line width=1.2pt, shorten >=2pt, shorten <=2pt] (1) -- (3);
\draw[->, line width=1.2pt, shorten >=2pt, shorten <=2pt] (1) -- (4);
\draw[->, line width=1.2pt, shorten >=2pt, shorten <=2pt] (1) -- (4);
\draw[->, line width=1.2pt, shorten >=2pt, shorten <=2pt] (1) -- (r);
\draw[->, line width=1.2pt, shorten >=2pt, shorten <=2pt] (1) -- (r);
\draw[->, line width=1.2pt, shorten >=2pt, shorten <=2pt] (1) -- (r1);
\draw[->, line width=1.2pt, shorten >=2pt, shorten <=2pt] (1) -- (2n);
\draw[->, line width=1.2pt, shorten >=2pt, shorten <=2pt] (1) -- (r);
\draw[->, line width=1.2pt, shorten >=2pt, shorten <=2pt] (1) -- (2n+1);

\draw[->, line width=1.2pt, shorten >=2pt, shorten <=2pt] (2n+1) -- (2);
\draw[->, line width=1.2pt, shorten >=2pt, shorten <=2pt] (2n+1) -- (3);
\draw[->, line width=1.2pt, shorten >=2pt, shorten <=2pt] (2n+1) -- (4);
\draw[->, line width=1.2pt, shorten >=2pt, shorten <=2pt] (2n+1) -- (r);
\draw[->, line width=1.2pt, shorten >=2pt, shorten <=2pt] (2n+1) -- (r1);
\draw[->, line width=1.2pt, shorten >=2pt, shorten <=2pt] (2n+1) -- (2n);
\draw[->, line width=1.2pt, shorten >=2pt, shorten <=2pt] (2n+1) -- (2);

\draw[->, line width=1.2pt, shorten >=2pt, shorten <=2pt] (r) -- (r1);
\draw[->, line width=1.2pt, shorten >=2pt, shorten <=2pt] (2n) -- (r1);

\end{tikzpicture}
 \label{G''UA}
}
\hspace{2.2cm}
\subfloat[Case $(b)$: $2 \rightarrow3$, $2n \rightarrow (2n+1)$.]{

  \begin{tikzpicture}[scale=0.45, transform shape,
    every node/.style={font=\Huge},
    line width=1.2pt,
    >=Stealth
  ]

\node[draw, fill=black, circle, minimum size=0.4cm, text=white, label=right:{$1$}] (1) at (1.15, -2.5) {};
\node[draw, fill=black, circle, minimum size=0.4cm, text=white, label=right:{$2$}] (2) at (3.4, 0) {};
\node[draw, fill=black, circle, minimum size=0.4cm, text=white, label=left:{$2n$}] (2n) at (-3.9, 0) {};
\node[draw, fill=black, circle, minimum size=0.4cm, text=white, label=right:{$3$}] (3) at (3.37, 2.3) {};
\node[draw, fill=black, circle, minimum size=0.4cm, text=white, label=left:{$2n-1$}] (r1) at (-3.9, 2.3) {};
\node[draw, fill=black, circle, minimum size=0.4cm, text=white, label=right:{$4$}] (4) at (1.2, 4) {};
\node[draw, fill=black, circle, minimum size=0.4cm, text=white] (r) at (-1.7, 4) {};
\node[draw, fill=black, circle, minimum size=0.4cm, text=white, label=left:{$2n+1$}] (2n+1) at (-1.65, -2.5) {};

\draw[->, line width=1.2pt, shorten >=2pt, shorten <=2pt] (1) -- (2);
\draw[->, line width=1.2pt, shorten >=2pt, shorten <=2pt] (2) -- (3);
\draw[<-, line width=1.2pt, shorten >=2pt, shorten <=2pt] (3) -- (4);
\draw[->, line width=1.2pt, shorten >=2pt, shorten <=2pt] (4) -- (r) node[midway, above, yshift=5pt] {. . . . . . .};

\draw[->, line width=1.2pt, shorten >=2pt, shorten <=2pt] (1) -- (3);
\draw[->, line width=1.2pt, shorten >=2pt, shorten <=2pt] (1) -- (4);
\draw[->, line width=1.2pt, shorten >=2pt, shorten <=2pt] (1) -- (4);
\draw[->, line width=1.2pt, shorten >=2pt, shorten <=2pt] (1) -- (r);
\draw[->, line width=1.2pt, shorten >=2pt, shorten <=2pt] (1) -- (r);
\draw[->, line width=1.2pt, shorten >=2pt, shorten <=2pt] (1) -- (r1);
\draw[->, line width=1.2pt, shorten >=2pt, shorten <=2pt] (1) -- (2n);
\draw[->, line width=1.2pt, shorten >=2pt, shorten <=2pt] (1) -- (r);
\draw[->, line width=1.2pt, shorten >=2pt, shorten <=2pt] (1) -- (2n+1);

\draw[<-, line width=1.2pt, shorten >=2pt, shorten <=2pt] (2n+1) -- (2);
\draw[<-, line width=1.2pt, shorten >=2pt, shorten <=2pt] (2n+1) -- (3);
\draw[<-, line width=1.2pt, shorten >=2pt, shorten <=2pt] (2n+1) -- (4);
\draw[<-, line width=1.2pt, shorten >=2pt, shorten <=2pt] (2n+1) -- (r);
\draw[<-, line width=1.2pt, shorten >=2pt, shorten <=2pt] (2n+1) -- (r1);
\draw[<-, line width=1.2pt, shorten >=2pt, shorten <=2pt] (2n+1) -- (2n);
\draw[<-, line width=1.2pt, shorten >=2pt, shorten <=2pt] (2n+1) -- (2);

\draw[->, line width=1.2pt, shorten >=2pt, shorten <=2pt] (r) -- (r1);
\draw[->, line width=1.2pt, shorten >=2pt, shorten <=2pt] (2n) -- (r1);

\end{tikzpicture}
 \label{G''UB}
}

\vspace{0.2cm}

\subfloat[Case $(c)$: $3 \rightarrow 2$, $(2n+1) \rightarrow 2n$.]{

  \begin{tikzpicture}[scale=0.45, transform shape,
    every node/.style={font=\Huge},
    line width=1.2pt,
    >=Stealth
  ]

\node[draw, fill=black, circle, minimum size=0.4cm, text=white, label=right:{$1$}] (1) at (1.15, -2.5) {};
\node[draw, fill=black, circle, minimum size=0.4cm, text=white, label=right:{$2$}] (2) at (3.4, 0) {};
\node[draw, fill=black, circle, minimum size=0.4cm, text=white, label=left:{$2n$}] (2n) at (-3.9, 0) {};
\node[draw, fill=black, circle, minimum size=0.4cm, text=white, label=right:{$3$}] (3) at (3.37, 2.3) {};
\node[draw, fill=black, circle, minimum size=0.4cm, text=white, label=left:{$2n-1$}] (r1) at (-3.9, 2.3) {};
\node[draw, fill=black, circle, minimum size=0.4cm, text=white, label=right:{$4$}] (4) at (1.2, 4) {};
\node[draw, fill=black, circle, minimum size=0.4cm, text=white] (r) at (-1.7, 4) {};
\node[draw, fill=black, circle, minimum size=0.4cm, text=white, label=left:{$2n+1$}] (2n+1) at (-1.65, -2.5) {};

\draw[->, line width=1.2pt, shorten >=2pt, shorten <=2pt] (1) -- (2);
\draw[->, line width=1.2pt, shorten >=2pt, shorten <=2pt] (3) -- (2);
\draw[->, line width=1.2pt, shorten >=2pt, shorten <=2pt] (3) -- (4);
\draw[<-, line width=1.2pt, shorten >=2pt, shorten <=2pt] (4) -- (r) node[midway, above, yshift=5pt] {. . . . . . .};

\draw[->, line width=1.2pt, shorten >=2pt, shorten <=2pt] (1) -- (3);
\draw[->, line width=1.2pt, shorten >=2pt, shorten <=2pt] (1) -- (4);
\draw[->, line width=1.2pt, shorten >=2pt, shorten <=2pt] (1) -- (4);
\draw[->, line width=1.2pt, shorten >=2pt, shorten <=2pt] (1) -- (r);
\draw[->, line width=1.2pt, shorten >=2pt, shorten <=2pt] (1) -- (r);
\draw[->, line width=1.2pt, shorten >=2pt, shorten <=2pt] (1) -- (r1);
\draw[->, line width=1.2pt, shorten >=2pt, shorten <=2pt] (1) -- (2n);
\draw[->, line width=1.2pt, shorten >=2pt, shorten <=2pt] (1) -- (r);
\draw[->, line width=1.2pt, shorten >=2pt, shorten <=2pt] (1) -- (2n+1);

\draw[->, line width=1.2pt, shorten >=2pt, shorten <=2pt] (2n+1) -- (2);
\draw[->, line width=1.2pt, shorten >=2pt, shorten <=2pt] (2n+1) -- (3);
\draw[->, line width=1.2pt, shorten >=2pt, shorten <=2pt] (2n+1) -- (4);
\draw[->, line width=1.2pt, shorten >=2pt, shorten <=2pt] (2n+1) -- (r);
\draw[->, line width=1.2pt, shorten >=2pt, shorten <=2pt] (2n+1) -- (r1);
\draw[->, line width=1.2pt, shorten >=2pt, shorten <=2pt] (2n+1) -- (2n);
\draw[->, line width=1.2pt, shorten >=2pt, shorten <=2pt] (2n+1) -- (2);

\draw[<-, line width=1.2pt, shorten >=2pt, shorten <=2pt] (r) -- (r1);
\draw[<-, line width=1.2pt, shorten >=2pt, shorten <=2pt] (2n) -- (r1);

\end{tikzpicture}
 \label{G''UC}
}
\hspace{2cm}
\subfloat[Case $(d)$: $3 \rightarrow2$, $2n \rightarrow (2n+1)$.]{

  \begin{tikzpicture}[scale=0.45, transform shape,
    every node/.style={font=\Huge},
    line width=1.2pt,
    >=Stealth
  ]

\node[draw, fill=black, circle, minimum size=0.4cm, text=white, label=right:{$1$}] (1) at (1.15, -2.5) {};
\node[draw, fill=black, circle, minimum size=0.4cm, text=white, label=right:{$2$}] (2) at (3.4, 0) {};
\node[draw, fill=black, circle, minimum size=0.4cm, text=white, label=left:{$2n$}] (2n) at (-3.9, 0) {};
\node[draw, fill=black, circle, minimum size=0.4cm, text=white, label=right:{$3$}] (3) at (3.37, 2.3) {};
\node[draw, fill=black, circle, minimum size=0.4cm, text=white, label=left:{$2n-1$}] (r1) at (-3.9, 2.3) {};
\node[draw, fill=black, circle, minimum size=0.4cm, text=white] (r) at (-1.7, 4) {};
\node[draw, fill=black, circle, minimum size=0.4cm, text=white, label=left:{$2n+1$}] (2n+1) at (-1.65, -2.5) {};

\draw[->, line width=1.2pt, shorten >=2pt, shorten <=2pt] (1) -- (2);
\draw[->, line width=1.2pt, shorten >=2pt, shorten <=2pt] (3) -- (2);
\draw[->, line width=1.2pt, shorten >=2pt, shorten <=2pt] (3) -- (4);
\draw[<-, line width=1.2pt, shorten >=2pt, shorten <=2pt] (4) -- (r) node[midway, above, yshift=5pt] {. . . . . . .};

\draw[->, line width=1.2pt, shorten >=2pt, shorten <=2pt] (1) -- (3);
\draw[->, line width=1.2pt, shorten >=2pt, shorten <=2pt] (1) -- (4);
\draw[->, line width=1.2pt, shorten >=2pt, shorten <=2pt] (1) -- (4);
\draw[->, line width=1.2pt, shorten >=2pt, shorten <=2pt] (1) -- (r);
\draw[->, line width=1.2pt, shorten >=2pt, shorten <=2pt] (1) -- (r);
\draw[->, line width=1.2pt, shorten >=2pt, shorten <=2pt] (1) -- (r1);
\draw[->, line width=1.2pt, shorten >=2pt, shorten <=2pt] (1) -- (2n);
\draw[->, line width=1.2pt, shorten >=2pt, shorten <=2pt] (1) -- (r);
\draw[->, line width=1.2pt, shorten >=2pt, shorten <=2pt] (1) -- (2n+1);

\draw[<-, line width=1.2pt, shorten >=2pt, shorten <=2pt] (2n+1) -- (2);
\draw[<-, line width=1.2pt, shorten >=2pt, shorten <=2pt] (2n+1) -- (3);
\draw[<-, line width=1.2pt, shorten >=2pt, shorten <=2pt] (2n+1) -- (4);
\draw[<-, line width=1.2pt, shorten >=2pt, shorten <=2pt] (2n+1) -- (r);
\draw[<-, line width=1.2pt, shorten >=2pt, shorten <=2pt] (2n+1) -- (r1);
\draw[<-, line width=1.2pt, shorten >=2pt, shorten <=2pt] (2n+1) -- (2n);
\draw[<-, line width=1.2pt, shorten >=2pt, shorten <=2pt] (2n+1) -- (2);

\draw[<-, line width=1.2pt, shorten >=2pt, shorten <=2pt] (r) -- (r1);
\draw[<-, line width=1.2pt, shorten >=2pt, shorten <=2pt] (2n) -- (r1);

\end{tikzpicture}
 \label{G''UD}
}

\caption{All possible semi-transitive orientations of $G'$, with $1$ as a source.}
\label{fig:4orientationsofG''U}
\end{figure}

Hence, there are exactly four possible semi-transitive orientations of $G'$ in which vertex $1$ is a source, as illustrated in Figure \ref{fig:4orientationsofG''U}. We now show that none of these orientations can be extended to a semi-transitive orientation of the original graph $G$; that is, after adding the vertices $x$ and $y$ together with all edges incident to $G'$, each resulting orientation fails to satisfy the semi-transitivity conditions.

We verify that none of the four orientations of \( G'\) shown in Figure \ref{fig:4orientationsofG''U} can extend to \( G \) once vertices \( x \) and \( y \) are restored:

\begin{enumerate}
    \item Extending the graph in Figure \ref{G''UA}. 
    \par
    The edge $\{2n, x\}$ cannot be oriented as $2n \rightarrow x$, since the directed path $1 \rightarrow (2n+1) \rightarrow 2n \rightarrow x$ forms a shortcut. Likewise, it cannot be oriented as $x \rightarrow 2n$, since the path $1 \rightarrow x \rightarrow 2n \rightarrow (2n-1)$ forms a shortcut. Therefore, no semi-transitive extension to $G$ exists in this case. The orientation of $\{1, x\}$ is fixed as $1 \rightarrow x$.

    \vspace{0.2cm}
    \item Extending the graph in Figure \ref{G''UB}. 
    \par
    The edges \( \{2, y\} \) and \( \{2n+1, y\} \) admit four possible orientations. In each case, a shortcut is formed as following:
    \begin{enumerate}
        \item $2 \rightarrow y$, $y \rightarrow (2n+1)$: $1 \rightarrow 2 \rightarrow y \rightarrow (2n+1)$.
        \item $2 \rightarrow y$, $(2n+1) \rightarrow y$: $2 \rightarrow 3 \rightarrow (2n+1) \rightarrow y$.
        \item $y \rightarrow 2$, $y \rightarrow (2n+1)$: $y \rightarrow 2 \rightarrow 3 \rightarrow (2n+1)$.
        \item $y \rightarrow 2$, $(2n+1) \rightarrow y$: $1 \rightarrow (2n+1) \rightarrow y \rightarrow 2$.
    \end{enumerate}
    Hence, no semi-transitive extension exists.

    \vspace{0.2cm}
    \item Extending the graph in Figure \ref{G''UC}. 
    \par
    As above, all four orientations of \( \{2, y\} \) and \( \{2n+1, y\} \) result in a shortcut as following:
    \begin{enumerate}
        \item $2 \rightarrow y$, $y \rightarrow (2n+1)$: $1 \rightarrow 2 \rightarrow y \rightarrow (2n+1)$.
        \item $2 \rightarrow y$, $(2n+1) \rightarrow y$: $(2n+1) \rightarrow 3 \rightarrow 2 \rightarrow y$.
        \item $y \rightarrow 2$, $y \rightarrow (2n+1)$: $y \rightarrow (2n+1) \rightarrow 3 \rightarrow 2$.
        \item $y \rightarrow 2$, $(2n+1) \rightarrow y$: $1 \rightarrow (2n+1) \rightarrow y \rightarrow 2$.
    \end{enumerate}
    Thus, no semi-transitive extension exists.

    \vspace{0.2cm}
    \item Extending the graph in Figure \ref{G''UD}. 
    \par
    The edge $\{2n, x\}$ cannot be oriented as $2n \rightarrow x$, since $1 \rightarrow (2n-1) \rightarrow 2n \rightarrow x$ forms a shortcut. Similarly, $x \rightarrow 2n$ yields the shortcut $1 \rightarrow x \rightarrow 2n \rightarrow (2n+1)$. Hence, no semi-transitive extension exists in this case.
\end{enumerate}
In each of the four cases, no semi-transitive extension to $G$ exists. Hence, $G$ is not word-representable. Since $G$ is a minimal non-comparability graph, every proper induced subgraph of $G$ is a comparability graph and hence word-representable. Therefore, $G$ is minimal non-word-representable.
\end{proof}

\begin{remark}
The graph $H_{1}$ is known to be non-word-representable; see \cite{GraphsCapturingAlternation}, 
where it appears under the name Co-$T_{2}$.
\end{remark}

With this, the proofs of Theorem \ref{wr-graphs-thm} and Theorem \ref{non-wr-graphs-thm} are concluded.

\medskip
\noindent
As a byproduct of the classification obtained in Theorems~\ref{wr-graphs-thm} and \ref{non-wr-graphs-thm}, we derive in the next subsection a characterization of minimal non-word-representable graphs with an all-adjacent vertex. In particular, these graphs arise by adjoining an all-adjacent vertex to the minimal non-comparability graphs that are word-representable.

\subsection{Minimal Non-Word-Representable Graphs with an All-Adjacent Vertex}

The minimal non-comparability graphs that are word-representable are precisely those listed in Theorem~\ref{wr-graphs-thm}. We show that adjoining an all-adjacent vertex to each of these graphs yields exactly the class of minimal non-word-representable graphs containing an all-adjacent vertex.

By Theorem~\ref{lemma_comp_wrg}, a graph $G$ is word-representable if and only if $H = G - x$, where $x$ is an all-adjacent vertex, is a comparability graph. Combining this with the classification of minimal non-comparability graphs with respect to word-representability, we obtain a complete characterization of minimal non-word-representable graphs containing an all-adjacent vertex. The following theorem formalizes this characterization.

\begin{theorem}
\label{minimal-nwrg-all-adj-thm}
Let $G$ be a graph on $n$ vertices, and let $x \in V(G)$ be a vertex of degree $n-1$. Let $H = G - x$. Then $G$ is minimal non-word-representable if and only if:
\begin{enumerate}
    \item $H$ is a minimal non-comparability graph, and
    \item $H$ is word-representable.
\end{enumerate}
\end{theorem}

\begin{proof}
Suppose that $H$ is a minimal non-comparability graph and is word-representable. Since $H$ is not a comparability graph, Theorem~\ref{lemma_comp_wrg} implies that $G$ is not word-representable. We now establish minimality. Let $H'$ be any proper induced subgraph of $G$. If $x \in V(H')$, then $H' - x$ is a proper induced subgraph of $H$. By minimality of $H$, the graph $H' - x$ is a comparability graph, and hence Theorem~\ref{lemma_comp_wrg} implies that $H'$ is word-representable. If $x \notin V(H')$, then $H' \subset H$. Since $H$ is word-representable, it follows that $H'$ is word-representable. Thus, every proper induced subgraph of $G$ is word-representable. Hence $G$ is minimal non-word-representable.

Conversely, suppose that $G$ is minimal non-word-representable and contains an all-adjacent vertex $x$. Let $H = G - x$. By minimality of $G$, every proper induced subgraph of $G$ is word-representable, and in particular $H$ is word-representable. If $H$ were a comparability graph, then Theorem~\ref{lemma_comp_wrg} would imply that $G$ is word-representable, a contradiction. Thus, $H$ is not a comparability graph. Finally, suppose that $H$ is not minimal non-comparability. Then there exists a proper induced subgraph $H_2 \subset H$ that is also non-comparability. The induced subgraph $H_2 \cup \{x\}$ is then not word-representable by Theorem~\ref{lemma_comp_wrg}, contradicting the minimality of $G$. Therefore, $H$ is a minimal non-comparability graph.
\end{proof}
We have identified all minimal non-comparability graphs that are word-representable. By Theorem~\ref{minimal-nwrg-all-adj-thm}, adjoining an all-adjacent vertex to each of these graphs yields precisely the class of minimal non-word-representable graphs containing an all-adjacent vertex.

In the next section, we turn to the second problem stated in the Introduction.

\section{Covering Word-Representable Graphs by Comparability Graphs}
\label{covering-section}

In this section, we study the relationship between word-representable graphs and comparability graphs from a covering perspective, which is quantitative in nature, complementing our earlier study on identifying common minimal induced subgraphs of these two classes. Specifically, we study the minimum number of comparability graphs required to cover the edge set of a word-representable graph. This corresponds to Problem~2 in the introduction, which was posed in~\cite{kenkireth2026word}.

We now introduce the notion of the cover number of a graph by a class of graphs.

\begin{definition}
\label{general-cover-number-defn}
Let $\mathcal{F}$ be a class of graphs. The \emph{cover number of a graph $G$ by $\mathcal{F}$}, denoted by $\text{cov}_{\mathcal{F}}(G)$, is the minimum integer $k$ for which there exist graphs $G_1, \dots, G_k \in \mathcal{F}$ such that
\[
E(G) = \bigcup_{i=1}^k E(G_i).
\]
\end{definition}

As a special case, if $\mathcal{F}$ is the class of comparability graphs, we denote the cover number of the graph $G$ by comparability graphs as $\text{cov}_{\text{comp}}(G)$. Similarly, for the class of perfect graphs, we denote it by $\text{cov}_{\text{perf}}(G)$. The cover number $\text{cov}_{\mathcal{F}}(G)$ provides a quantitative measure of the extent to which a graph $G$ deviates from the structural properties of the class $\mathcal{F}$. In the context of this section, where $G$ is a word-representable graph and $\mathcal{F}$ is the class of comparability graphs, $\text{cov}_{\text{comp}}(G)$ quantifies the distance of $G$ from the class of comparability graphs; larger values indicate greater structural deviation.

\begin{figure}[htbp]
\centering

\tikzset{
    vertex/.style = {shape=circle},
    undirected/.style = {thick},
    directed/.style = {
        thick,
        postaction={decorate},
        decoration={
            markings,
            mark=at position 0.5 with {\arrow{>}}
        }
    }
}

\subfloat[]{
\centering
\begin{tikzpicture}[scale=0.48, transform shape, every node/.style={font=\Large}]

\node[vertex,fill=black,text=white,minimum size=0.35cm,label=left:{$1$}] (a1) at (-1.5,-1.5) {};
\node[vertex,fill=black,text=white,minimum size=0.35cm,label=above:{$2$}] (a2) at (0.15,0) {};
\node[vertex,fill=black,text=white,minimum size=0.35cm,label=right:{$3$}] (a3) at (1.5,-1.5) {};
\node[vertex,fill=black,text=white,minimum size=0.35cm,label=left:{$4$}] (a4) at (-1.5,-3.5) {};
\node[vertex,fill=black,text=white,minimum size=0.35cm,label=right:{$5$}] (a5) at (1.5,-3.5) {};

\draw[undirected] (a1)--(a2)--(a3)--(a5)--(a4)--(a1);

\end{tikzpicture}
}
\hfill
\subfloat[]{
\centering
\begin{tikzpicture}[scale=0.48, transform shape, every node/.style={font=\Large}]

\node[vertex,fill=black,text=white,minimum size=0.35cm,label=left:{$1$}] (b1) at (-1.5,-1.5) {};
\node[vertex,fill=black,text=white,minimum size=0.35cm,label=above:{$2$}] (b2) at (0.15,0) {};
\node[vertex,fill=black,text=white,minimum size=0.35cm,label=right:{$3$}] (b3) at (1.5,-1.5) {};
\node[vertex,fill=black,text=white,minimum size=0.35cm,label=left:{$4$}] (b4) at (-1.5,-3.5) {};
\node[vertex,fill=black,text=white,minimum size=0.35cm,label=right:{$5$}] (b5) at (1.5,-3.5) {};

\draw[directed] (b2)--(b3);
\draw[directed] (b5)--(b3);
\draw[directed] (b5)--(b4);

\end{tikzpicture}
}
\hfill
\subfloat[]{
\centering
\begin{tikzpicture}[scale=0.48, transform shape, every node/.style={font=\Large}]

\node[vertex,fill=black,text=white,minimum size=0.35cm,label=left:{$1$}] (c1) at (-1.5,-1.5) {};
\node[vertex,fill=black,text=white,minimum size=0.35cm,label=above:{$2$}] (c2) at (0.15,0) {};
\node[vertex,fill=black,text=white,minimum size=0.35cm,label=right:{$3$}] (c3) at (1.5,-1.5) {};
\node[vertex,fill=black,text=white,minimum size=0.35cm,label=left:{$4$}] (c4) at (-1.5,-3.5) {};
\node[vertex,fill=black,text=white,minimum size=0.35cm,label=right:{$5$}] (c5) at (1.5,-3.5) {};

\draw[directed] (c1)--(c2);
\draw[directed] (c1)--(c4);

\end{tikzpicture}
}

\caption{Representation of $C_5$ as a union of two comparability graphs}
\label{fig:c5_decomposition}
\end{figure}
\begin{example}
If $G$ is a comparability graph, then its cover number by comparability graphs is $1$, since $G$ itself suffices. On the other hand, if $G$ is a word-representable graph that is not a comparability graph, then its cover number is at least $2$. The cycle $C_5$ is not a comparability graph. However, it can be expressed as the union of two comparability graphs, as illustrated in Figure~\ref{fig:c5_decomposition}. Hence, the cover number of $C_5$ by comparability graphs is $2$. 
\end{example}

Marits \cite{MARITS2026400} derived an exact formula for the cover number of a graph $G$ by the class of graphs satisfying $\chi(H) = \omega(H)$. This result, given as Corollary 4 in \cite{MARITS2026400}, is stated below.

\begin{theorem}[\cite{MARITS2026400}]
\label{cover-formula}
For every graph $G$, the cover number of $G$ by the class of graphs $\mathcal{C} = \{H \mid \chi(H) = \omega(H)\}$ is given by
\[
\left\lceil \frac{\log \chi(G)}{\log \omega(G)} \right\rceil.
\]
\end{theorem}

Given that comparability graphs form a subclass of perfect graphs, we can formalize a lower bound for $\text{cov}_{\text{comp}}(G)$ using the bounds established by Marits \cite{MARITS2026400}. Theorem 1 in \cite{MARITS2026400} states that if $\mathcal{P}$ and $\mathcal{Q}$ are graph classes such that $\mathcal{P} \subseteq \mathcal{Q}$, then their respective cover numbers, denoted by $c_{\mathcal{P}}(G)$ and $c_{\mathcal{Q}}(G)$, satisfy $c_{\mathcal{P}}(G) \ge c_{\mathcal{Q}}(G)$ for any graph $G$. Applying this to the subclass relationship between comparability and perfect graphs \cite{golumbic2004algorithmic}, it immediately follows that $\text{cov}_{\text{comp}}(G) \ge \text{cov}_{\text{perf}}(G)$.

Furthermore, Theorem 6 in \cite{MARITS2026400} establishes that $\text{cov}_{\text{perf}}(G)$ is bounded below by the value in Theorem~\ref{cover-formula}. Combining these facts yields the following lower bound for the cover number of a graph $G$ by comparability graphs:
\[
\text{cov}_{\text{comp}}(G) \ge \text{cov}_{\text{perf}}(G) \ge \left\lceil \frac{\log \chi(G)}{\log \omega(G)} \right\rceil.
\]

We now turn to circle graphs, a subclass of word-representable graphs, to establish an upper bound on their cover number by comparability graphs.

\begin{definition}
A graph $G$ is a \emph{circle graph} if its vertices can be represented as chords of a circle, where two vertices are adjacent if and only if their corresponding chords intersect.
\end{definition}

Circle graphs are known to be $\chi$-bounded; that is, there exists a function $f$ such that $\chi(G) \le f(\omega(G))$ for every circle graph $G$. The following result provides an explicit upper bound for this function:

\begin{theorem}[\cite{davies2022improved}]
\label{thm:circle-chi-bound}
Every circle graph $G$ with clique number $\omega(G)=\omega$ satisfies
\[
\chi(G) \le 2\omega \log \omega + 2\omega \log(\log \omega) + 10\omega.
\]
\end{theorem}

Since every bipartite graph is a comparability graph \cite{golumbic2004algorithmic}, the cover number of any graph $G$ by comparability graphs is bounded above by its cover number by bipartite graphs:
\[
\operatorname{cov}_{\text{comp}}(G) \le \operatorname{cov}_{\text{bip}}(G).
\]
It is a well-established result by Harary et al. \cite{harary1977biparticity} that the number of bipartite graphs required to cover the edges of a graph $G$ is precisely $\lceil \log \chi(G) \rceil$. Therefore, we have the following inequality:
\[
\operatorname{cov}_{\text{comp}}(G) \le \lceil \log \chi(G) \rceil.
\]
By substituting the chromatic number bound from Theorem~\ref{thm:circle-chi-bound} into this expression, we obtain:
\[
\operatorname{cov}_{\text{comp}}(G) \le \left\lceil \log \left( 2\omega \log \omega + 2\omega \log(\log \omega) + 10\omega \right) \right\rceil.
\]
To determine the asymptotic behavior as $\omega \to \infty$, we observe that the expression inside the logarithm is dominated by the term $\omega \log \omega$. Consequently, the upper bound simplifies as follows:
\[
\operatorname{cov}_{\text{comp}}(G) = O(\log(\omega \log \omega)) = O(\log \omega + \log \log \omega) = O(\log \omega).
\]

Hence, for a circle graph $G$, the cover number $\operatorname{cov}_{\text{comp}}(G)$ is $O(\log \omega(G))$.

We now consider the case where $\omega(G) \le 2$, which corresponds precisely to triangle-free circle graphs. 

\begin{theorem}
Let $G$ be a triangle-free circle graph. Then the cover number of $G$ by comparability graphs is at most $3$, and this bound is tight.
\end{theorem}

\begin{proof}
It is a well-established result by Kostochka \cite{kostochka1988upper} that every triangle-free circle graph $G$ satisfies $\chi(G) \le 5$. The number of bipartite graphs required to cover the edges of a graph $G$ is given by $\lceil \log_2 \chi(G) \rceil$ \cite{harary1977biparticity}. Since every bipartite graph is a comparability graph, the cover number by comparability graphs is bounded by the bipartite cover number:
\[
\text{cov}_{\text{comp}}(G) \le \text{cov}_{\text{bip}}(G) = \lceil \log_2 \chi(G) \rceil.
\]
Substituting $\chi(G) \le 5$ yields $\text{cov}_{\text{comp}}(G) \le \lceil \log_2 5 \rceil = 3$. 

To show the bound is tight, consider the triangle-free circle graph constructed by Ageev \cite{ageev1996triangle}, which has $\chi(G) = 5$. Since $\omega(G) = 2$, we apply the lower bound from Theorem \ref{cover-formula}:
\[
\text{cov}_{\text{comp}}(G) \ge \left\lceil \frac{\log_2 \chi(G)}{\log_2 \omega(G)} \right\rceil = \left\lceil \frac{\log_2 5}{\log_2 2} \right\rceil = 3.
\]
This confirms that the bound is tight.
\end{proof}

We now prove a logarithmic lower bound on the cover number by comparability graphs for word-representable graphs. The proof relies on Theorem~\ref{thm:suk-tomon}, due to Suk and Tomon, regarding the chromatic number of Hasse diagrams.

It is shown in \cite{book} (Theorem 4.2.14) that every Hasse diagram (cover graph of a poset) is a word-representable graph, specifically establishing that the class of triangle-free word-representable graphs is exactly the class of cover graphs of posets.

The following result establishes the existence of Hasse diagrams with large chromatic number.

\begin{theorem}[\cite{Suk-Tomon}]
\label{thm:suk-tomon}
For every positive integer $n$, there exists a Hasse diagram $G$ on $n$ vertices with chromatic number $\chi(G) = \Omega(n^{1/4})$.
\end{theorem}

\begin{theorem}
\label{triangle-free-infinite-proof-logn}
There exist word-representable graphs on $n$ vertices whose cover number by comparability graphs is $\Omega(\log n)$.
\end{theorem}

\begin{proof}
For any sufficiently large $n$, let $G_n$ be a Hasse diagram on $n$ vertices as guaranteed by Theorem~\ref{thm:suk-tomon}, satisfying $\chi(G_n) = \Omega(n^{1/4})$. As Hasse diagrams are triangle-free, we have $\omega(G_n) \le 2$. Since $\chi(G_n) \to \infty$ as $n \to \infty$, $G_n$ must contain at least one edge for sufficiently large $n$, ensuring $\omega(G_n) = 2$.

Applying the lower bound established by Marits \cite{MARITS2026400}, the cover number satisfies:
\[
\operatorname{cov}_{\text{comp}}(G_n) \ge \left\lceil \frac{\log \chi(G_n)}{\log \omega(G_n)} \right\rceil = \lceil \log \chi(G_n) \rceil.
\]
Substituting $\chi(G_n) = \Omega(n^{1/4})$, there exists a constant $c > 0$ such that $\chi(G_n) \ge c n^{1/4}$ for all sufficiently large $n$. Consequently:
\[
\log \chi(G_n) \ge \log(c n^{1/4}) = \frac{1}{4} \log n + \log c = \Omega(\log n).
\]
Thus, $\operatorname{cov}_{\text{comp}}(G_n) = \Omega(\log n)$. Since Hasse diagrams are word-representable \cite{book}, the result follows.
\end{proof}

\begin{remark}
The edge set of any graph $G$ on $n$ vertices can be covered by $O(\log n)$ comparability graphs. This follows because the edge set of $G$ can be covered by $\lceil \log_2 \chi(G) \rceil$ bipartite graphs \cite{harary1977biparticity}, and every bipartite graph is a comparability graph. Since $\chi(G) \le n$, the cover number is $O(\log n)$. Combined with Theorem~\ref{triangle-free-infinite-proof-logn}, this shows that the $O(\log n)$ upper bound on the cover number by comparability graphs is asymptotically tight for the class of word-representable graphs.
\end{remark}

We now identify several subclasses of word-representable graphs such that each graph in these classes has cover number by comparability graphs at most $2$.  We use standard definitions of outer-planar graphs, $2$-trees, sub-cubic graphs, and triangle-free planar graphs (see \cite{book}). It is known that these classes are all subclasses of word-representable graphs; see \cite{book} for details.

\begin{theorem}
The graphs in each of the following classes of word-representable graphs can be expressed as the union of at most two comparability graphs:
\begin{enumerate}
    \item outer-planar graphs,
    \item $2$-trees,
    \item sub-cubic graphs,
    \item triangle-free planar graphs.
\end{enumerate}
\end{theorem}

\begin{proof}
We first note that every bipartite graph is a comparability graph \cite{golumbic2004algorithmic}. Hence, any representation of a graph as a union of bipartite graphs is also a representation as a union of comparability graphs. We now use the fact that the number of bipartite graphs required to cover the edge set of a graph $G$ is exactly $\lceil \log \chi(G) \rceil$, as given by \cite{harary1977biparticity}. Thus, every graph $G$ with chromatic number $\chi(G)$ can be expressed as the union of exactly $\lceil \log \chi(G) \rceil$ comparability graphs.

We apply this to each class separately. It is known that outer-planar graphs, $2$-trees, and triangle-free planar graphs are $3$-colorable (see \cite{book}), while sub-cubic graphs satisfy $\chi(G) \le 4$. In all cases,
$
\lceil \log \chi(G) \rceil \le 2.
$
Therefore, every graph in these classes can be expressed as the union of at most two comparability graphs.
\end{proof}

\section{Concluding remarks}
\label{conclusion-section}
In this work, we determine the intersection of minimal non-comparability graphs and minimal non-word-representable graphs. This intersection consists of two infinite families of graphs, namely $G_n^9$ for $n \geq 3$ and $G_n^4$, together with the graph $H_1$, all illustrated in Figure~\ref{fig:min-non-comp-all}. This result is obtained by classifying minimal non-comparability graphs into those that are word-representable and those that are not. As a consequence, we completely characterize the set of minimal non-word-representable graphs containing an all-adjacent vertex.

Furthermore, we investigate the cover number by comparability graphs within the class of word-representable graphs. We demonstrate the existence of word-representable graphs on $n$ vertices whose cover number by comparability graphs is $\Omega(\log n)$. Since any $n$-vertex graph can be covered by $O(\log n)$ comparability graphs, this upper bound is asymptotically tight for the class of word-representable graphs. Additionally, we establish an upper bound of $O(\log \omega(G))$ for the cover number by comparability graphs of any circle graph $G$. For triangle-free circle graphs, we establish that the cover number by comparability graphs is at most $3$ and demonstrate that this bound is tight. Finally, we identify four subclasses of word-representable graphs for which every graph in these classes has a cover number by comparability graphs of at most $2$.

\bibliography{sn-bibliography}

\end{document}